\pgfplotsset{compat=newest}
\begin{document}
\newcounter{margin} % Counter for margin comments
\setcounter{margin}{1}
\newcommand{\inlinecomment}[2]{{\color{#1}{[#2]}}}
\newcommand{\new}[1]{{\color{purple}{#1}}}

\theoremstyle{plain}
\newtheorem{thm}{Theorem}[section]
\newtheorem{lem}[thm]{Lemma}
\theoremstyle{definition}
\newtheorem{defn}[thm]{Definition}
\newtheorem{exmp}[thm]{Example}
\newtheorem{prop}[thm]{Proposition}
%--------------MATH-------------------
\newcommand\regCoef{\ensuremath{\beta}}  % regularization coefficient
\newcommand\lr{\ensuremath{\eta}}  % learning rate
\newcommand\penExp{\ensuremath{\gamma}}  % penalty exponent
\newcommand\range{\ensuremath{\lambda}}  % fit range
\newcommand\regRange{\ensuremath{\lambda_{reg}}}  % regularization range
\newcommand\degree{\ensuremath{n}}  % polynomial degree

%--------------Crypto----------------
\newcommand\base{\ensuremath{x}}  % example to raise to a power
\newcommand\exponent{\ensuremath{k}}  % example power
\newcommand\partyA{\ensuremath{A}}  % first party index
\newcommand\partyB{\ensuremath{B}}  % second party index
\newcommand\partyaVec{\ensuremath{\mathbf{a}}}  % first party intermediate
\newcommand\partybVec{\ensuremath{\mathbf{b}}}  % second party intermediate
\newcommand\coefficients{\ensuremath{\alpha}}  % example power

\newcommand\totalprecision{\ensuremath{L}}  % 64 bits

\newcommand\precision{\ensuremath{p}}  % precision
\newcommand\scaleDown{\ensuremath{\bar{s}}}  % precision
\newcommand\shares[1]{[[#1]]}  % secret shares

%-----------NAMES----------------
\newcommand\polytrick{PILLAR}  % our poly training approach
\newcommand\binotrick{ESPN}  % our poly eval trick
\newcommand\honeytrick{HoneyBadger}  % honeybadger
\newcommand\coinn{COINN}  
\newcommand\gforce{GForce}
\newcommand\cryptgpu{CryptGPU}
\newcommand\cheetah{Cheetah}

%-------- Plots---------------
\newcommand{\plotheight}{5.9cm}
\newcommand{\plotscatter}[5]{
\begin{tikzpicture}[scale=0.9]
\begin{axis}[
    title={#1},
    xlabel={Inference Time (s)},
    ylabel={Test Accuracy (\%)},
    xmode=log,
    grid=major,
    ymin=#3,
    ymax=#4,
    legend pos={#5},
    % error bars/.cd,
]

\addplot+[
    scatter, only marks,
    scatter src=explicit symbolic,
    scatter/classes={
        This Work={mark=triangle*,blue, fill=blue, mark size=4.5},
        COINN={mark=square*,blue, fill=blue, mark size=3},
        GForce={mark=diamond*,blue, fill=blue, mark size=4},
        CryptGPU={mark=pentagon*,blue, fill=blue, mark size=4}
    },
    error bars/.cd,
    y dir=both,
    y explicit,
] table [
    x=Time,
    y=Accuracy,
    % y error=AccuracySD,
    meta=Work,
    col sep=comma,
] {#2};
\addlegendentry{This Work}
\addlegendentry{COINN~\cite{hussainCOINNCryptoML2021}}
\addlegendentry{GForce~\cite{ngGForceGPUFriendlyOblivious2021}}
\addlegendentry{CryptGPU~\cite{tanCryptGPUFastPrivacyPreserving2021}}

\end{axis}
\end{tikzpicture}
}

%-------------------------------------------------------------------------------

%don't want date printed
\date{}

% make title bold and 14 pt font (Latex default is non-bold, 16 pt)
\title{\Large \bf Fast and Private Inference of Deep Neural Networks by Co-designing Activation Functions}

\newcommand*\samethanks[1][\value{footnote}]{\footnotemark[#1]}
\author[ \ 1]{Abdulrahman Diaa\thanks{Equal contribution}}
\author[ \ 1]{Lucas Fenaux\samethanks}
\author[ \ 1]{Thomas Humphries\samethanks}
\author[1]{Marian Dietz}
\author[1]{Faezeh Ebrahimianghazani}
\author[1]{Bailey Kacsmar}
\author[1]{Xinda Li}
\author[1]{Nils Lukas}
\author[1]{Rasoul Akhavan Mahdavi}
\author[1]{Simon Oya}
\author[1, 2]{Ehsan Amjadian}
\author[1]{Florian Kerschbaum}
\affil[1]{University of Waterloo}
\affil[2]{Royal Bank of Canada}
\affil[ ]{\textit {\{abdulrahman.diaa, lucas.fenaux, thomas.humphries, marian.dietz, f5ebrahi, bkacsmar, xinda.li, nlukas, rasoul.akhavan.mahdavi, simon.oya, ehsan.amjadian, florian.kerschbaum\}@uwaterloo.ca}}

\maketitle

%-------------------------------------------------------------------------------
\begin{abstract}
%-------------------------------------------------------------------------------
Machine Learning as a Service (MLaaS) is an increasingly popular design where a company with abundant computing resources trains a deep neural network and offers query access for tasks like image classification. The challenge with this design is that MLaaS requires the client to reveal their potentially sensitive queries to the company hosting the model. Multi-party computation (MPC) protects the client's data by allowing encrypted inferences. However, current approaches suffer from prohibitively large inference times. The inference time bottleneck in MPC is the evaluation of non-linear layers such as ReLU activation functions. Motivated by the success of previous work co-designing machine learning and MPC, we develop an activation function co-design. We replace all ReLUs with a polynomial approximation and evaluate them with single-round MPC protocols, which give state-of-the-art inference times in wide-area networks. Furthermore, to address the accuracy issues previously encountered with polynomial activations, we propose a novel training algorithm that gives accuracy competitive with plaintext models. Our evaluation shows between $3$ and $110\times$ speedups in inference time on large models with up to $23$ million parameters while maintaining competitive inference accuracy.

\end{abstract}

%-------------------------------------------------------------------------------
\section{Introduction}
%-------------------------------------------------------------------------------

\begin{figure}[ht]
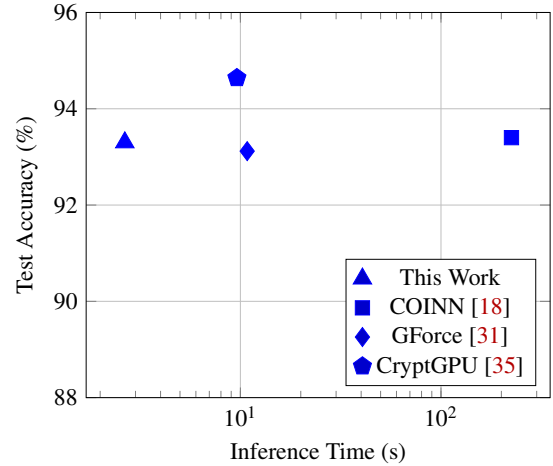

\centering
    \plotscatter{}{csvs/cifar10wan.csv}{88}{96}{south east}
    \caption{Summary of the inference time in seconds vs. test accuracy for each state-of-the-art approach on the CIFAR-10 dataset in the WAN (100 ms roundtrip delay).}\label{fig:cifar_10_summary}
\end{figure}

The rapid development of increasingly capable machine learning (ML) models has resulted in significant demand for products like machine learning as a service (MLaaS).
In this scenario, big tech companies with vast computing resources train large machine learning models and provide users with  query access.
The major pitfall with MLaaS is that it requires clients to submit potentially sensitive queries to an untrusted entity.
A promising solution to this problem is to employ cryptography to ensure the queries and inferences are hidden from the model owner.
Secure inference is an active field of research with many solutions and different threat models as summarized in a recent SoK~\cite{ngSoKCryptographicNeuralNetwork2023}.
The challenge is that despite recent advances, the inference times are still prohibitively large compared to plaintext inferences.

This work focuses on reducing the runtime of secure inference on image data, under realistic network conditions, while maintaining classification accuracy.
We consider the two-party setting using multi-party computation (MPC), where the server holds the modified ML model, and the client holds the data to query the model.
Recent state-of-the-art works in this space employ various co-design approaches to reduce the inference time~\cite{ngSoKCryptographicNeuralNetwork2023}.
For example, \coinn~co-designs ML models optimized for quantization with efficient MPC protocols tailored to the custom models~\cite{hussainCOINNCryptoML2021}.
\coinn~substantially compresses the model and makes numerous optimizations to the architecture to achieve fast inferences.
Another example is GForce, which tailors the cryptography needed for ML to high-speed GPU hardware~\cite{ngGForceGPUFriendlyOblivious2021}.
By offloading vast amounts of work to the pre-computation phase, they are able to achieve state-of-the-art runtime and accuracy in secure inference~\cite{ngGForceGPUFriendlyOblivious2021}.
Similarly, \cryptgpu~\cite{tanCryptGPUFastPrivacyPreserving2021} modifies the CrypTen framework~\cite{knottCrypTenSecureMultiParty2021} to run efficiently on the GPU and give state-of-the-art inference times in wide area networks.
However, despite making major steps towards practical inference, none of these works remove a crucial bottleneck in secure inference: the non-linear layers.

It is well known that the non-linear layers are the bottleneck of secure inference~\cite{garimellaSisyphusCautionaryTale2021,hussainCOINNCryptoML2021,mishraDelphiCryptographicInference2020,CryptoNASProceedings34th}.
This is because secure computation on arithmetic shares is optimized for multiplications and additions, instead of non-linear layers such as ReLU activation functions or MaxPool layers.
In order to compute these non-linear functions, expensive conversions between different types of MPC protocols are required.
Specifically, in more realistic network settings with high latency, the inference time is substantially degraded due to each conversion taking many rounds of communication.
This problem is particularly prevalent in deep neural networks (DNNs), where a non-linear activation separates each of the many linear layers.

This work addresses the non-linear layers by taking a co-design approach between the activation functions and MPC.
We take the approach of replacing classic ReLU activation functions with a polynomial approximation during training and inference to avoid conversions altogether.
Previous work has considered this approach but with limited success~\cite{garimellaSisyphusCautionaryTale2021}. 
We propose two modifications to make this approach practical.
First, we develop and evaluate new single-round MPC protocols that give the fastest evaluation of polynomials to date.
The challenge with using polynomials is they severely impact model accuracy~\cite{hussainCOINNCryptoML2021,mishraDelphiCryptographicInference2020,garimellaSisyphusCautionaryTale2021}.
Previous work could not successfully train DNNs with more than $11$ layers due to exploding gradients~\cite{garimellaSisyphusCautionaryTale2021}.
Thus, our second contribution is tailoring the ML training process to ensure high accuracy and stable training using polynomials.
Our approach utilizes a new type of regularization that focuses on keeping the input to each activation function within a small range.
We achieve close to plaintext accuracy on models as deep as ResNet-110~\cite{heDeepResidualLearning2016} and as large as a ResNet-50 on ImageNet~\cite{dengImageNetLargescaleHierarchical2009}~(23 million parameters).
The combination of these approaches yields a co-design with state-of-the-art inference times and the highest accuracy for polynomial models.

We compare our work with three solutions representing the state-of-the-art approaches in secure inference according to Ng and Chow~\cite{ngSoKCryptographicNeuralNetwork2023}.
We summarize our results in Figure~\ref{fig:cifar_10_summary}.
Combining the single-round MPC protocols with our activation regularization achieves significantly faster inference times than all other solutions.
Specifically, our solution is faster than \cryptgpu~by $4\times$, \gforce~by $5\times$, and \coinn~by $40\times$ on average in wide area networks.
Our approach also scales to large models on ImageNet with a $110\times$ speedup over COINN, $14\times$ speedup over \cheetah, and a $3\times$ speedup over \cryptgpu.
Furthermore, our inference accuracy remains competitive with all other solutions.
\cryptgpu~often gives slightly higher accuracy as it can evaluate any plaintext model (albeit slower than our work).
Thus, the challenge for future work is to further close the ML accuracy gap  between plain and polynomial models.

%Note

%-------------------------------------------------------------------------------
\section{Background}\label{sec:background}
%-------------------------------------------------------------------------------
\subsection{Multi Party Computation}
Secure multi-party computation (MPC) allows a set of parties to jointly compute a function while keeping their inputs to the function private. 
We focus on a variant of MPC which performs operations over shares of the data \cite{ben-orCompletenessTheoremsNoncryptographic1988}.
We use $\shares{s} = [[s]_\partyA, [s]_\partyB]$ to denote that the value of $s$ is shared among participants, where $[s]_\partyA$ is the share held by party $\partyA$ and $[s]_\partyB$ by party $\partyB$. 
Arithmetic MPC protocols utilize a linear secret sharing scheme, such as an additive secret sharing scheme to compute complex circuits using combinations of additions and multiplications.
Given constants $v_1,v_2,v_3$ and shares of values $\shares{x},\shares{y}$, one can locally compute 
\begin{equation}\label{eq:linear_shares}
 v_1\shares{x}+v_2\shares{y}+v_3 = \shares{v_1\cdot x + v_2\cdot y + v_3}
\end{equation}
to obtain shares of the value $v_1\cdot x + v_2\cdot y + v_3$. 
For multiplication, one can use Beaver's trick to multiply using a single round of communication between parties~\cite{beaver91}.
Specifically, we assume a triplet of random numbers $a,b,c$ (called a Beaver triplet) was generated such that $a\cdot b=c$ and secret shared among all parties ahead of time (typically in an offline precomputation phase). 
Then the parties compute $\shares{x\cdot y}$ by first locally computing $\shares{a+x}= A$ and $\shares{b+y} =B$ and reconstructing $A$ and $B$ so that both parties have them in plaintext. 
This reconstruction is the single round required. Using these values, the parties compute the result locally as $\shares{x\cdot y} = A \shares{y} + (-B)\shares{a} +\shares{c}$, using the linearity property in equation~\ref{eq:linear_shares}.

Arithmetic MPC protocols are limited to basic multiplications and additions. Thus, for computing non-linear operations such as comparisons, other techniques such as converting to binary secret shares or using Yao's garbled circuits are common~\cite{demmlerABYFrameworkEfficient2015}. A binary secret sharing scheme is an arithmetic scheme carried out bitwise in the ring $\mathbb{Z}_2$. Specifically, the difference is that we first decompose $x$ into its bits and have a separate arithmetic share of each bit. By maintaining this bitwise structure, operations such as XOR or bit shifts are trivial. We describe how to use a binary and arithmetic secret-sharing scheme together to compute non-linear functions in Section~\ref{sec:motivation}.

\subsection{Neural Network Inference}
We consider DNN classifiers with domain $\mathcal{X}\subseteq \mathbb{R}^d$ and range $\mathcal{Y}\subseteq \mathbb{R}^c$. 
DNN classifiers consist of a sequence of layers, each performing either a linear or a non-linear operation.
The ResNet~\cite{heDeepResidualLearning2016} architecture we consider is composed of (i) convolutional, (ii) fully connected, (iii) pooling, (iv) batch normalization, and (v) ReLU layers.  
All layers are linear, except for $\text{ReLU}(x)= \text{max}(x, 0)$ and max pooling (that can be replaced with average pooling). 
To classify an input $x$, a classifier $h$, passes the input sequentially through each layer. 
Upon reaching the last layer, the prediction is obtained by taking $\text{arg max}_{i\in\{1..c\}}h(x)_i$, where we call $h$ the logit function for a classifier $h: \mathcal{X} \rightarrow \mathcal{Y}$. 
Our work is tailored to securely perform the inference phase.
Specifically, for an encrypted input $x$ we compute the encrypted output $h(x)$ of the logit function. 
However, to achieve improvements during inference, our work requires modification to the training phase that generates $h$.
%%

%-------------------------------------------------------------------------------

\section{Problem Setup and Motivation}\label{sec:problem}

%-------------------------------------------------------------------------------

\subsection{Problem Setup}
We follow the same threat model as prior work for two-party secure inference~\cite{hussainCOINNCryptoML2021,knottCrypTenSecureMultiParty2021,mishraDelphiCryptographicInference2020,ngGForceGPUFriendlyOblivious2021}.
Specifically, we follow the two-party client-server model where the server has a machine learning model (a DNN) they have trained (with our technique), and the client holds private data upon which they would like to make an inference.
Like Delphi~\cite{mishraDelphiCryptographicInference2020}, our work assumes the server uses a modified training procedure. 
The server's input to the protocol is the weights of their trained model, which they do not want to leak to the client (due to intellectual property or protecting their MLaaS business~\cite{StealingMachineLearning}). 
The client has a private input (typically an image) they would like to classify using the model but do not want to leak this input or the prediction to the server. 
That is, the MPC function can be written as $f(\text{image}, \text{model})= (label, \emptyset)$. 
Following previous work~\cite{hussainCOINNCryptoML2021,knottCrypTenSecureMultiParty2021,mishraDelphiCryptographicInference2020,ngGForceGPUFriendlyOblivious2021}, we consider the semi-honest model~\cite[\S7.2.2]{goldreich09}, where adversaries do not deviate from the protocol but may gather information to infer private information.
Also, in line with previous work, we assume the model architecture is known to both parties. 
This includes the dimensions and type of each layer, parameters such as field size used for inference, and the mean and standard deviation of the training set~\cite{knottCrypTenSecureMultiParty2021}. 

Our goal is to reduce the inference time as much as possible.
We are willing to incur a small degradation in accuracy to achieve practical runtimes.
For example, in the streaming setting, applications like spam detection are runtime-critical (a small accuracy trade-off can be tolerated to make it feasible). 
Since all the protocols we compare with contain pre-computation, we focus on the online phase for fair comparison. 
Furthermore, the online time determines the latency, which we focus on reducing in this work. 
We build from CrypTen~\cite{knottCrypTenSecureMultiParty2021}, which does not implement pre-computation and rather assumes a third-party dealer for Beaver triplets. 
However, in practice, the server and client could generate the Beaver triplets in a pre-computation phase using off-the-shelf protocols~\cite{keller_2016, Rathee_2019}.

\subsection{Privacy During Model Training}
We focus only on the inference phase of machine learning.
However, the privacy of the training process and training data is an orthogonal but essential problem.
We recommend that the data owner take appropriate steps to protect the privacy of the model, such as training using differential privacy~\cite{abadiDeepLearningDifferential2016} or rounding the output of the inference.
Furthermore, during training, care should be taken to protect against threats such as model stealing, which can be launched using only the inference result~\cite{jagielskiHighAccuracyHigh2020}.
To summarize, the model owner learns nothing other than the fact a query was made.
We ensure only the inference is revealed to the client; however, ML attacks that only require black box query access~\cite{jagielskiHighAccuracyHigh2020} must be defended against during the training process.
Since we focus the effects of MPC on the runtime and accuracy of ML inference, we did not conflate this comparison with additional privacy preserving training goals.
Any privacy-preserving technique would add a similar overhead (e.g., reducing the accuracy) to all approaches we evaluate.

\subsection{Motivating the  Co-Design of Activation Functions}\label{sec:motivation}
It has been well established in the literature that activation functions such as ReLU are the bottleneck in MPC-based secure inference, taking up to 93\% of the inference time~\cite{garimellaSisyphusCautionaryTale2021,hussainCOINNCryptoML2021,mishraDelphiCryptographicInference2020,CryptoNASProceedings34th}. 
The reason for this is that current approaches use different types of MPC protocols for a model's linear and non-linear layers~\cite{hussainCOINNCryptoML2021, knottCrypTenSecureMultiParty2021, ngGForceGPUFriendlyOblivious2021,jhaDeepReDuceReLUReduction2021,mishraDelphiCryptographicInference2020}. 
The linear layers are typically computed using standard arithmetic secret-sharing protocols tailored for additions and multiplications.
The non-linear layers are computed using garbled circuits or binary secret share-based protocols.
The bottleneck in wide area networks is typically the conversions between these protocols as they require a large number of communication rounds.
A typical DNN architecture has many linear layers, each followed by a non-linear layer resulting in a prohibitively large number of conversions.

Consider CrypTen, a PyTorch-based secure ML library, as a baseline approach~\cite{knottCrypTenSecureMultiParty2021}.
CrypTen uses binary shares to evaluate boolean non-linear layers such as ReLUs and MaxPooling layers.
Specifically, all linear layers are computed using standard multiplication and addition protocols over arithmetic shares.
To compute $\shares{ReLU(\base)}$ at each layer, $\shares{\base}$ is first converted to binary shares using a carry look-ahead adder.
Once in binary shares, CrypTen extracts the sign bit to compute $\shares{\base>0}$ (a local operation). 
The sign bit, $\shares{\base>0}$, is then converted back to arithmetic shares (trivial for a single bit) and multiplied with $\shares{\base}$ to get $\shares{ReLU(\base)}$. 
The problem with this approach is that each conversion takes $O(log(\totalprecision))$ communication rounds.
Taking into account the additional round needed for multiplication, we observe nine communication rounds per ReLU in practice (under 64-bit precision).
Recently, more sophisticated MPC protocols have been proposed that reduce the number of rounds needed for comparisons in arithmetic shares~\cite{catrinaImprovedPrimitivesSecure2010, catrinaRoundEfficientProtocolsSecure2018} or reduce the cost of binary share conversions~\cite{escuderoImprovedPrimitivesMPC2020}.
However, even if one were to implement these protocols in CrypTen, the number of rounds needed for non-linear layers would still outweigh the number needed for linear layers.

Motivated by this bottleneck, several works have focused on either reducing the number of ReLUs or replacing ReLUs altogether~\cite{CryptoNASProceedings34th,garimellaSisyphusCautionaryTale2021,jhaDeepReDuceReLUReduction2021,leePreciseApproximationConvolutional2021,mishraDelphiCryptographicInference2020,mohasselSecureMLSystemScalable2017}.
One approach is to approximate each ReLU with a high degree polynomial~\cite{garimellaSisyphusCautionaryTale2021, leePreciseApproximationConvolutional2021}.
The advantage of using polynomials is that polynomials can be computed using arithmetic shares, thus removing the need for expensive conversions and improving the total inference time.
A significant challenge with polynomials is maintaining model accuracy~\cite{garimellaSisyphusCautionaryTale2021,hussainCOINNCryptoML2021,CryptoNASProceedings34th,jhaDeepReDuceReLUReduction2021,mishraDelphiCryptographicInference2020}.

Thus, this work aims to provide a secure inference protocol with state-of-the-art inference time and accuracy in realistic networks with high latency.
To do this, we take a co-design approach to balance accuracy and fast inference time.
In Section~\ref{sec:crypto}, we develop MPC protocols that achieve the fastest evaluation of polynomials to date, assuming a modified ML architecture.
In Section~\ref{sec:ml}, we tailor the ML training procedure to achieve high accuracy using this modified architecture.

\section{Faster Evaluation of Polynomials}\label{sec:crypto}
%-------------------------------------------------------------------------------
In this section, we evaluate the speed-up of replacing ReLU's with a naive polynomial approximation.
We then develop our single-round protocols and show that they drastically reduce the activation function evaluation time in wide area networks.
\subsection{The Polynomial Advantage}

\begin{figure}
    \centering
    \includegraphics[width=0.75\columnwidth]{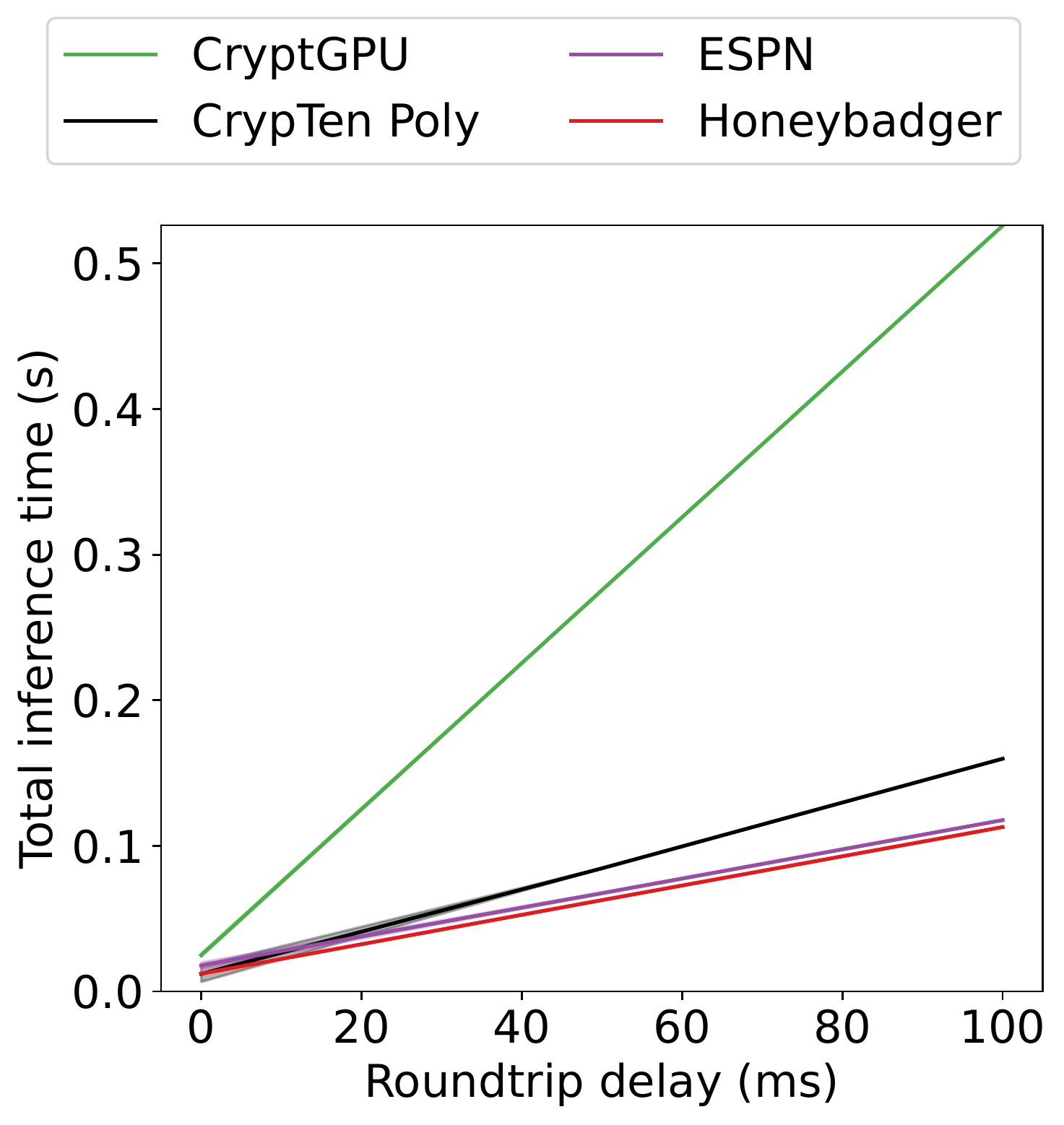}
    \caption{Benchmarking the secure evaluation of ReLU activation functions using various approaches. The $x$-axis is the network delay in ms and the $y$-axis is the mean runtime in seconds averaged over $20$ runs with the shaded area representing the $95\%$ confidence intervals.}
    \label{fig:relu_bench}
\end{figure}

To highlight the speed-up of polynomials over standard ReLUs, we first evaluate the runtime of a single layer with $2^{15}$ ReLU activation functions in Figure~\ref{fig:relu_bench}. (See Section~\ref{sec:experiments} for more implementation details.)
First, we plot an unmodified version of CrypTen (using CryptGPU~\cite{tanCryptGPUFastPrivacyPreserving2021}) with the conversion to binary shares. 
Next, we replace the ReLU with a degree four polynomial fitted using least squares polynomial regression (see Section~\ref{sec:ml} for the details).
We can see that using polynomials in off-the-shelf CrypTen is much faster across all network speeds than the default mixed arithmetic and binary protocol.
This difference becomes more pronounced as we add more network delay or scale to deeper models with more ReLUs.

Despite the significant speed-up, naively computing a polynomial is still expensive in MPC with a non-trivial number of communication rounds.
For example, Horner's method (an iterative approach to evaluating polynomials) uses $O(\degree)$ communication rounds (where $\degree$ is the degree of the polynomial).
However, most MPC libraries (including CrypTen) use the square-and-multiply algorithm for exponentiation, followed by multiplying and summing the coefficients locally.
The square-and-multiply algorithm requires $O(log(\degree))$ multiplications (and thus rounds) in MPC.
In practice, the default square and multiply implementation in CrypTen uses two rounds per ReLU for a degree four polynomial.
To increase the advantage of using polynomial activation functions even further, we develop a new single-round protocol for evaluating polynomials in MPC.

\subsection{\binotrick: \underline{E}xponentiating \underline{S}ecret Shared Values using \underline{P}ascal's tria\underline{n}gle}
We present our single-round, highly parallelizable protocol 
\binotrick~for computing high-degree polynomials.
The fundamental idea is utilizing the binomial theorem (Pascal's triangle) to achieve faster exponentiation.
We begin by describing our protocol for raising a number $\shares{\base}$ to the power $\exponent$, in MPC (see Algorithm~\ref{alg:binomial_exp} for an overview).
Using the additive secret sharing scheme, the exponentiation corresponds to $(\base_\partyA+\base_\partyB)^\exponent$ where $\base_\partyA$ represents the first party's share and $\base_\partyB$ represents the second (such that $\base_\partyA + \base_\partyB= \base$).
The binomial theorem expands this expression as:
\begin{equation}\label{eq:bino_formula}
 \base^\exponent = (\base_\partyA +\base_\partyB)^\exponent = \sum_{i=0}^\exponent {\exponent \choose i} \base_\partyA ^{\exponent-i} \base_\partyB^i
\end{equation}
We observe that, for each $i$ in the sum, party $\partyA$ can compute $\partyaVec_i = \base_\partyA ^{\exponent-i}$ without needing to communicate with party $\partyB$ (Alg.~\ref{alg:binomial_exp} line~\ref{line:partya}).
Similarly, party $\partyB$ can compute $\base_\partyB^i$ without communicating with party $\partyA$ (Alg.~\ref{alg:binomial_exp} line~\ref{line:partyb}).
Finally, $\exponent \choose i$ can be computed by any party (or pre-computed ahead of time).
For simplicity, we assign the computation of $\exponent \choose i$ to party $\partyB$.
Thus party $\partyB$ computes $\partybVec_i = {\exponent \choose i} \base_\partyB^i$. 

Once each party has computed their respective vectors, we multiply $\partyaVec_i \cdot \partybVec_i$ for each $i$ in parallel (Alg.~\ref{alg:binomial_exp} line~\ref{line:mult_shares}).
We carry out this multiplication using standard MPC protocols in one round.
To use these multiplication protocols, each party must have a share of the input.
We use a trivial additive secret sharing, where the other party inputs zero as their share to the protocol (Alg.~\ref{alg:binomial_exp} line~\ref{line:init}).
Finally, after the multiplication, the sum of the binomial theorem can be efficiently computed with no communication (Alg.~\ref{alg:binomial_exp} line~\ref{line:get_result}).

\begin{algorithm}
\caption{Exponentiation Protocol for 2-party additive secret-sharing}\label{alg:exp}
\begin{algorithmic}[1]
 \Procedure{$Exp$}{$\shares{\base}, \exponent$}
 \State $\mathbf{\partyaVec} = \shares{\mathbf{0}^\exponent}$, $\mathbf{\partybVec} = \shares{\mathbf{0}^\exponent}$ \Comment{Initialize shares}\label{line:init}

 \For{$i = 0:\exponent$}
 \State Party $\partyA$ computes: $\partyaVec_i = \partyaVec_i +\base_\partyA^{\exponent-i}$\label{line:partya}
 \State Party $\partyB$ computes: $\partybVec_i =\partybVec_i + {\exponent \choose i} \base_\partyB^i$ \label{line:partyb}
 \EndFor
 \State $\mathbf{p} = BeaverMultiply(\mathbf{\partyaVec}, \mathbf{\partybVec})$\Comment{Parallel Multiplication}\label{line:mult_shares}
 \State $s = \sum_{i=0}^\exponent \mathbf{p_i}$ \Comment{$s = \mathbf{\partyaVec} \cdot \mathbf{\partybVec}$}\label{line:get_result}
 \State \textbf{return} $s$\Comment{secret-shares of $\base^\exponent$}
 \EndProcedure
\end{algorithmic}\label{alg:binomial_exp}
\end{algorithm}
\subsection{Polynomial Evaluation with ESPN}\label{sec.poly_eval}
\paragraph{Floating Point Considerations.}
Our initial description of \binotrick considers the integer domain for simplicity. 
Extending to floating point values is straightforward but requires rescaling (a standard practice in fixed-point arithmetic).
We use CrypTen's two-party, local truncation protocol to ensure we do not incur additional rounds.
However, there is a negligible chance of an incorrect result from this truncation protocol due to wrap-around in the ring.
Specifically, the probability of an incorrect result when truncating $x$ is $\frac{|x|}{2^\totalprecision}$ where $2^\totalprecision$ is the size of the ring~\cite[Appendix C.1.1]{knottCrypTenSecureMultiParty2021}. 
This implies that $x$ must be small compared to the ring for this fast truncation protocol to be correct.

\paragraph{Polynomial Evaluation Protocol.}
Using our exponentiation protocol, we show how to compute high-degree polynomials efficiently in a single round.
We overview the protocol in Algorithm~\ref{alg:poly_eval}.
First, in parallel, we compute all needed exponents using Algorithm~\ref{alg:binomial_exp}. 
We recall that to ensure the correctness of truncation, we must ensure all intermediate values remain small.
For simplicity, we define small to be that no intermediate scale becomes larger than twice the working precision $\precision$.
We show the complete failure probability calculations of Algorithm~\ref{alg:poly_eval} in Appendix~\ref{app:correct_proofs}.

To ensure the values remain small after exponentiation, we create multiple scaled-down copies of the input $\base$, proportional to each exponent we need to calculate. 
To compute $\base^i$ we first scale $\base$ down by $2^{-\scaleDown}$ where $\scaleDown = \lceil (i-2) \precision / i \rceil)$ and $\precision$ is the current working precision of $\base$ (line~\ref{line:pre_scale_down}).
$\scaleDown$ is chosen such that $2^\precision$ becomes approximately $2^{2\precision/i}$ after scaling and thus approximately $2^{2\precision}$ after exponentiation by $i$.
These are approximations as not all values of $i$ divide $(i-2) \precision$, so there is some error from taking the ceiling. 
To account for the additional factor of approximately two, we do an additional rescaling after each exponentiation in line~\ref{line:post_scale_down}.
This rescaling incorporates $\scaleDown$ (which includes the ceiling function) to ensure all values are scaled back to $2^\precision$.
Finally, after computing all powers, we can locally multiply the result by the coefficients (public values) and sum in line~\ref{line:get_poly}.
We give a complete proof of correctness for Algorithm~\ref{alg:binomial_exp} and \ref{alg:poly_eval}, including the truncation operator in Appendix~\ref{app:correct_proofs}.

\begin{algorithm}
    \caption{Polynomial Evaluation Protocol for 2-party additive secret-sharing}\label{alg:exp}
    \begin{algorithmic}[1]
     \Procedure{$Poly$}{$\shares{\base}, \coefficients, \degree$}
     \State $\mathbf{p}_1 = \base$
     \For{$i = 2:\degree$} \Comment{In parallel}
     \State Let $\scaleDown = \lceil (i-2) \precision / i \rceil$ \Comment{Scale down factor}\label{line:scale_down_factor}
     \State $x'_i = x \cdot 2^{-\scaleDown}$\Comment{Scale down before Exp}\label{line:pre_scale_down}
     \State $\mathbf{p}_i = Exp(x'_i,i)$\Comment{From Algorithm~\ref{alg:binomial_exp}}\label{line:exp}
     \State $\mathbf{p}'_i =  \mathbf{p}_i * 2^{-\precision*(i-1)+ \scaleDown\cdot i}$\Comment{Scale down after Exp}\label{line:post_scale_down}
     \EndFor
     \State $y = \coefficients_0 + \sum_{i=1}^\degree \mathbf\coefficients_i \cdot {p}'_i$ \Comment{Locally dot product}\label{line:get_poly}
     \State \textbf{return} $y$\Comment{Secret-shares of $f(x)$}
     \EndProcedure
    \end{algorithmic}\label{alg:poly_eval}
    \end{algorithm}

    \paragraph{Hyperparameter Restrictions.}
    While Algorithm~\ref{alg:poly_eval} is designed to keep intermediate values small, multiple hyperparameters determine the protocol's effectiveness.
    Using results from Appendix~\ref{app:correct_proofs} (namely the max of Theorem~\ref{thm:first_trunc} and Theorem~\ref{thm:second_trunc}), we get that the probability of failure for a given truncation is bounded by
    \begin{equation}\label{eq:precision}
        Pr[\text{Truncation Failure}] \leq \frac{2^{\degree(\lceil \log_2{\range}\rceil + 1) + 2\precision}}{2^{\totalprecision}}.
    \end{equation}
    
    For our experiments, we use CrypTen, with $\totalprecision=64$~\cite{knottCrypTenSecureMultiParty2021} and a default working precision of $\precision=16$. Assuming default values of $\degree=4$ and $\range=5$ we get a failure probability bound of $2^{-16}$. However, this is a pessimistic upper bound since we consider the worst-case input of $\pm 5$.
    Conversely, in our experiments, we find that the distribution of inputs follows an approximately Laplace distribution as shown in Appendix~\ref{app:poly_fit}. 
    Thus, we observe a much smaller empirical failure probability.
    
    While our default parameters were experimentally chosen to give high classification accuracy (Section~\ref{sec:experiments}, shows a minor degradation over plaintext accuracy), it is unclear if these values are optimal.
    For example, one approach to reducing the failure probability is to decrease $\precision$ from $16$ to $10$-bit (which gives a failure probability of $2^{-28}$).
    However, the trade-off is that the intermediate $\base'_i$ values will be truncated severely.
    For instance, with $\degree=4$, $x'_i$ will be truncated to $5$-bits. 
    We find that this loss of precision is too significant to simulate a ReLU function accurately. 
    Similarly, while a higher degree polynomial might better approximate a ReLU, a higher degree will negatively affect both the failure probability and the precision loss of $x'_i$.
    Future work could conduct an extensive hyperparameter search of all parameters to find the optimal trade-off.
    However, as we see in Section~\ref{sec:experiments} (Tables~\ref{tab:cryptgpu}-\ref{tab:imagenet}), this would \emph{at most} yield a $0.5\%$ increase in encrypted classification accuracy (PyTorch vs. CrypTen accuracy), for the worst model and dataset.

\paragraph*{Evaluating Algorithm~\ref{alg:poly_eval}.}
In Figure~\ref{fig:relu_bench}, we plot this approach alongside the previous approaches to evaluate the runtime.
\binotrick~incurs slightly more overhead in the LAN setting; however, it scales significantly better (the confidence intervals do not overlap) to wide area networks that can be expected in practice.

\subsection{Alternative Single Round Protocol: HoneyBadger}

Like \binotrick, Lu et al. give a single round protocol for exponentiation in MPC~\cite{luHoneyBadgerMPCAsynchroMixPractical2019}.
Despite focusing on a completely different problem (anonymous communication), they provide an MPC protocol of independent interest for exponentiation, which we also utilize in our work.
They take a very different approach to our work that yields different trade-offs.
Instead of the binomial theorem, their work utilizes the following factoring rule
\begin{equation}\label{eq:honeybadger}
 \base^\exponent - r^\exponent = (\base-r)\sum\limits_{i=0}^{\exponent-1} \base^{\exponent -i -1}r^i
\end{equation}
where $r$ is a random secret-shared number derived during pre-computation.
We assume each party has a share of $\base$ and a share of $r^i$ for $i\in\{1,\dots,\exponent\}$ before beginning the protocol (instead of the more common Beaver triplets).
The first step in the protocol is to compute and reveal $\base-r$ ($\base$ blinded by $r$), which uses a single round.
Once revealed, this value becomes a public constant $C$.
After some algebraic manipulation of (\ref{eq:honeybadger}), Lu et al. obtain a recursive formula for $\base^ir^j$ given below.
\begin{equation}\label{eq:honey2}
 \shares{\base^\exponent r^j} = \shares{r^{\exponent+j}} + C\sum\limits_{i=0}^{\exponent-1} \shares{\base^{\exponent -i -1}r^{i+j}}
\end{equation}
Using dynamic programming, the parties can then compute any power ($\base^\exponent r^0$) using only additions of previously computed terms and powers of $r$.
To compute polynomials using this protocol, we simply swap the call to $Exp$ in line~\ref{line:exp} of Algorithm~\ref{alg:poly_eval}.

The advantage of Lu et al.'s protocol is that the communication is small (only the opening of $\base-r$). 
The primary disadvantage is that the protocol requires a modified pre-computation phase, which is as difficult to pre-compute securely as the original problem (it is exponentiation).
On the contrary, our binomial protocol uses standard Beaver triplets commonly found in MPC frameworks. 
There are well established protocols for efficiently computing these triplets, and the parties may already have them due to the popularity of Beaver's trick.
A more minor disadvantage of Lu et al.'s solution is that, while the protocol requires very little communication, it is not locally parallelizable as each dynamic programming step depends on the previous one. 
In contrast, our entire protocol can be executed in parallel. 

We also consider the runtime of using HoneyBadger in Figure~\ref{fig:relu_bench}.
We emphasize this is a runtime-only evaluation. 
Without our training algorithm in Section~\ref{sec:ml}, none of the polynomial-based solutions can attain usable accuracy.
We find that \binotrick~and HoneyBadger perform similarly in practice, with HoneyBadger gaining a slight advantage in very low network delay. 
Due to the pre-computation trade-offs, we will evaluate both approaches for the remainder of this work.

\begin{figure*}[htbp]
\centering

\begin{minipage}[t]{0.31\textwidth}
    \centering
    \includegraphics[width=\columnwidth]{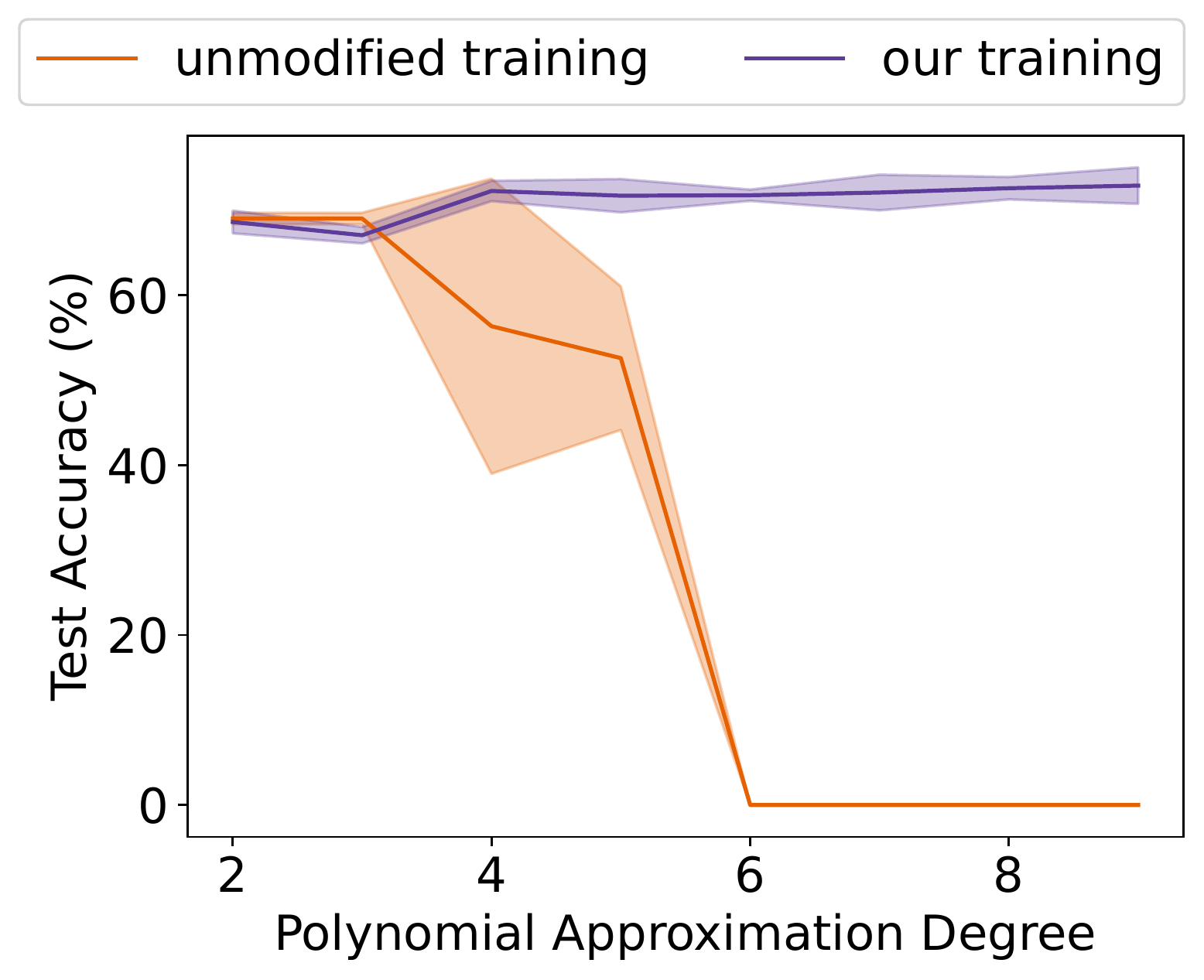}
    \caption{Accuracy of a 2-layer convolutional network trained with varying degrees for the polynomial activation function.} \label{fig:accuracy_vs_degree}
\end{minipage}\hfill%
\begin{minipage}[t]{0.31\textwidth}
    \centering
    \includegraphics[width=\columnwidth]{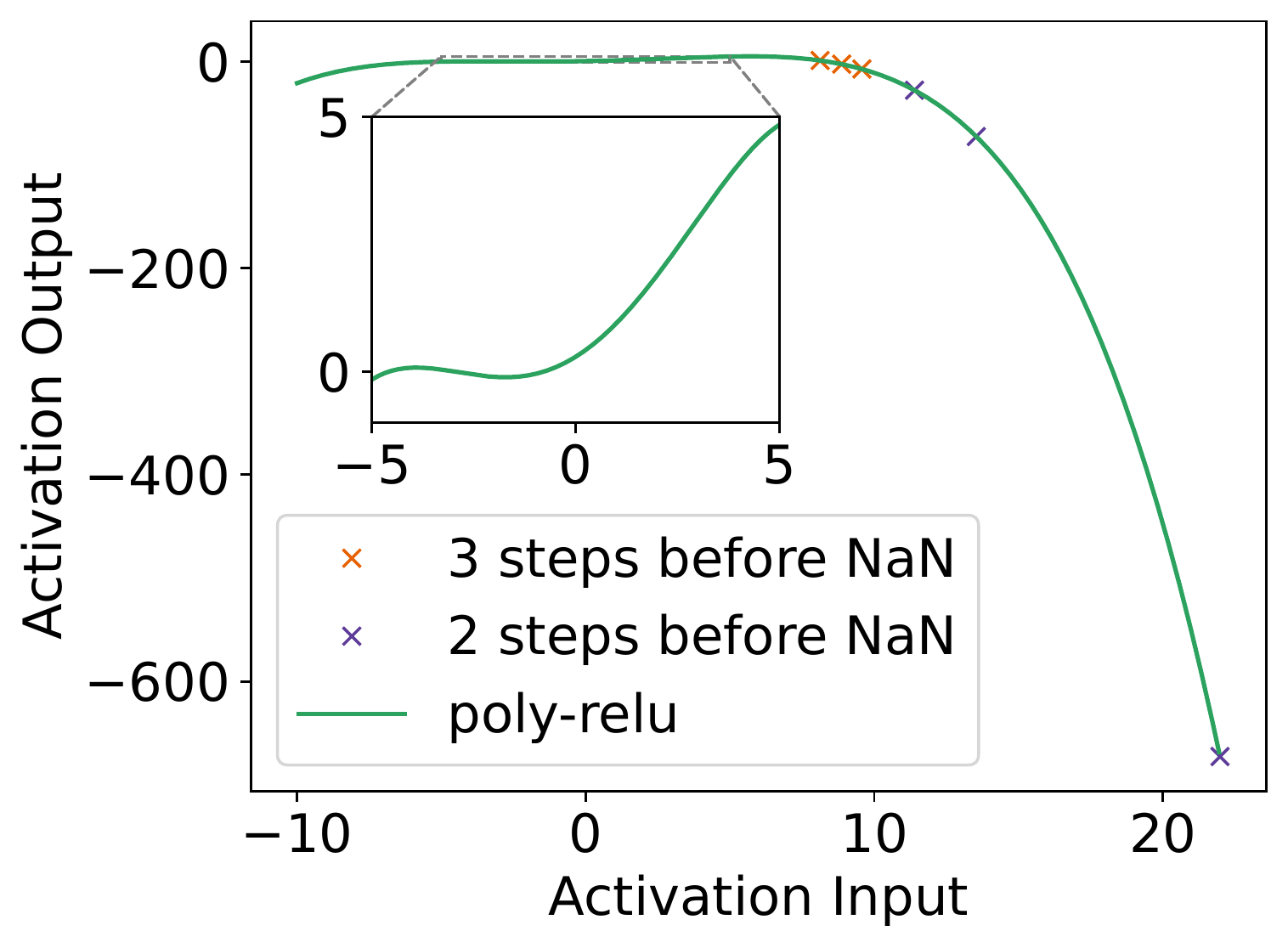}
    \caption{Illustrating the escaping activation problem for the two layers convolutional network.}
    \label{fig:blow_up}
\end{minipage}\hfill%
\begin{minipage}[t]{0.31\textwidth}
    \centering
    \includegraphics[width=\columnwidth]{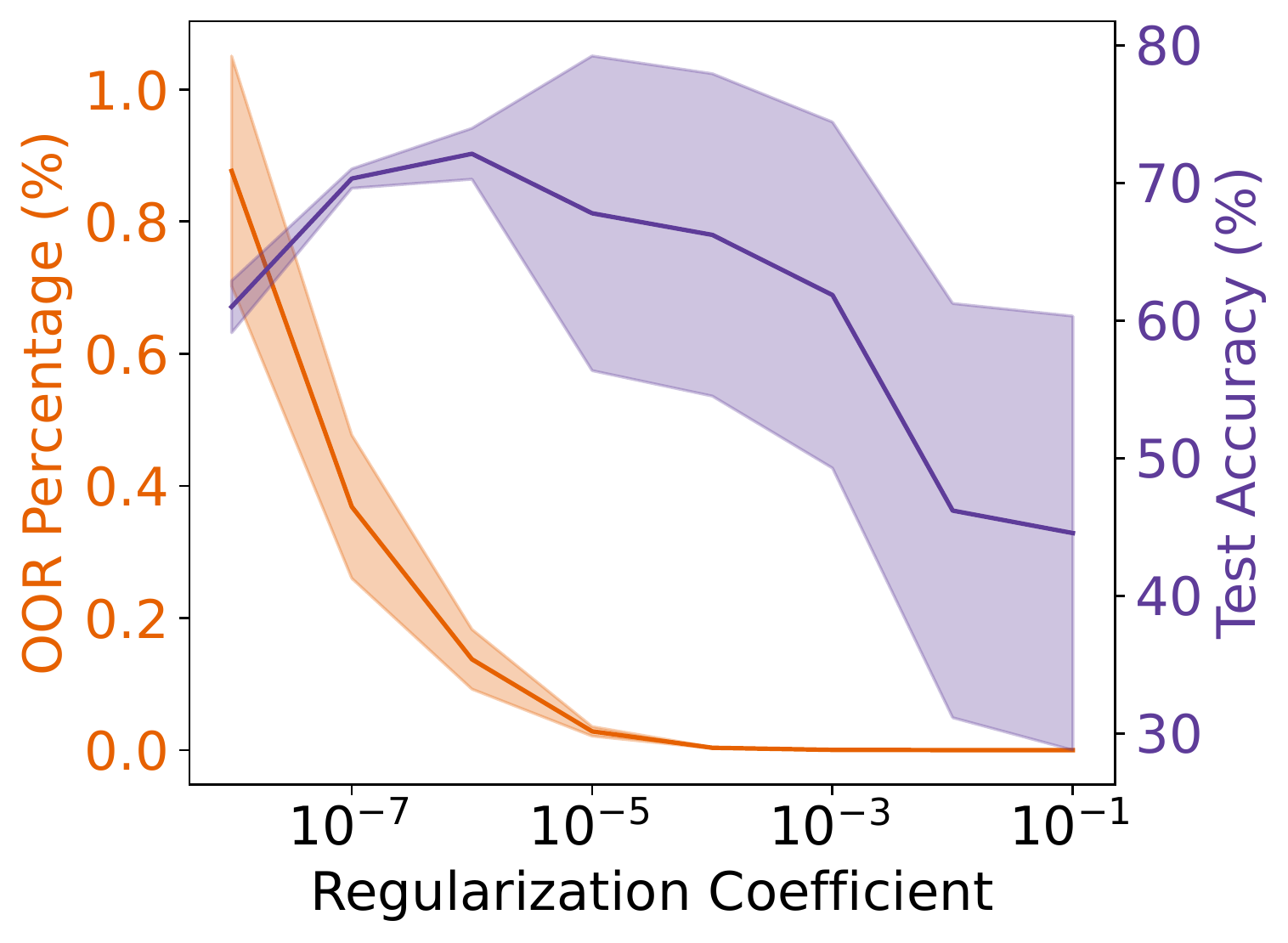}
    \caption{The effect of  the regularization coefficient, $\regCoef$ on model accuracy and out-of-range ratio for $\penExp=10$}\label{fig:OOR_coeff}

\end{minipage}

\end{figure*}

%-------------------------------------------------------------------------------
\section[\polytrick:~\underline{P}olynom\underline{i}a\underline{l}~\underline{A}ctivation \underline{R}egularization]{\polytrick:~\underline{P}olynom\underline{i}a\underline{l}~\underline{A}ctivation \\  \underline{R}egularization}\label{sec:ml}
% \section{\polytrick:~\underline{P}olynom\underline{i}a\underline{l}~\underline{A}ctivation \underline{R}egularization}
%-------------------------------------------------------------------------------
Our initial benchmark in Section~\ref{sec:crypto} showed a significant speed-up when replacing ReLU functions with polynomials implemented using ESPN and HoneyBadger. 
However, a notable challenge neglected thus far is that replacing a ReLU with a polynomial can drastically reduce the accuracy of the model~\cite{hussainCOINNCryptoML2021,garimellaSisyphusCautionaryTale2021}. 
This section discusses the causes of the accuracy degradation and  describes our mitigation techniques. 
Finally, we give empirical results showcasing the high accuracy of our modified training procedures across various architectures and datasets.

\subsection{The Problem with Polynomial Activation Functions}\label{sec:poly_blow_up}

\paragraph{Escaping Activations.}
The first step in replacing an activation function with a polynomial is to design a polynomial that closely approximates the original function.
A common approach for this is the least-squares polynomial fitting.
In this approach, a table of values is created for the polynomial over a small discretized range of values.
This creates a system of equations for the polynomial coefficients that can be solved with least squares.
The challenge with this approach is that, outside of this range, the polynomial no longer resembles the original activation function and often diverges rapidly.
This leads to a problem called escaping activations, first identified by Garimella et al.~\cite{garimellaSisyphusCautionaryTale2021}.
If one naively swaps a ReLU for its polynomial approximation, all weights will become infinite within a few training epochs.
We give an example of this degradation in Figure~\ref{fig:accuracy_vs_degree}. 
We can see that, without modifying the training procedure, a polynomial can completely destroy the accuracy.

To illustrate the problem more clearly, we conduct an experiment using a polynomial of degree four fitted on the range $[-5,5]$ (\range=5) as the activation function for a three-layer model on CIFAR-10~\cite{CIFAR}.
In Figure~\ref{fig:blow_up}, we plot this polynomial activation function and the $\ell_\infty$-norm of the input and output to each activation function.
We note that, with no modification (except replacing ReLUs with polynomials), the weights of this model become undefined within approximately three epochs of training. 
First, we note the divergent behaviour of the polynomial outside the fitted range. 
Second, we observe the effect of the divergence on the outputs of the activation function.
Specifically, we wait until the model weights become undefined (NaN in Python) and then observe the behaviour leading up to the explosion.
We can see that three steps before the model weights become undefined (NaN), the input values of each activation are out-of-range, but the outputs still behave similarly to a ReLU.
However, in the next iteration (two steps before NaN), a single value in the first layer goes too far out of range.
This causes a ripple effect for the other two layers, creating an extremely large output (approx $-2000$) in the final activation function.
This large value creates a large gradient, and after another iteration of training, the values become so large that the gradients (and weights) become undefined (NaN).
We find that minimizing the classification loss alone is not enough to keep the model in range as the gradients explode before decreasing the loss.

\paragraph{Truncated Polynomial Coefficients.}
An additional challenge is that we will evaluate the fitted polynomial in a finite ring with limited precision.
This significantly impacts the polynomial coefficients, which tend to be relatively small, especially for the higher-order terms. 
Specifically, these small coefficients can get truncated to zero in limited precision, which causes the polynomial to diverge even inside the fitted range.
We give an example of this in Appendix~\ref{app:poly_fit}.

\subsection{Defining PILLAR}
Our approach, which we call \polytrick, is the combination of the components we describe in this section.
Activation function regularization is our primary approach for mitigating escaping activation functions. 
However, to scale to larger models, we find that the additional steps of clipping, regularization warm-up, and adding batch normalization are beneficial.

\paragraph{Quantization-Aware Polynomial Fitting.}
We begin by solving the problem of truncated polynomial coefficients.
To address this, we fit the polynomial with the precision constraint in mind.
We do this by using mixed integer non-linear programming. 
Let $X$ be the set of all values between $[-\range,\range]$ in $\precision$-bit precision (the domain we want to fit on).
First, we generate $Y=ReLU(X)\cdot 2^{\precision}$, a table of values for a standard ReLU scaled up by the precision.
Scaling the output of the ReLU allows us to work in the integer domain (similar to fixed point arithmetic). 
We then compute a matrix $B$ where each column is the different powers of $X$ used in a polynomial ($B=[X^0, X^1, X^2, \dots]$).

Next, we solve the system $AB=Y$ for $A$ using mixed integer linear programming with $A\in [-2^{\precision}-1, 2^{\precision}-1]$ to get the coefficients $A$ that minimize the error between the polynomial $AB$ and the ReLU values $Y$.
Finally, we scale the resulting coefficients down by $2^{\precision}$.
We note that $A\in [-2^{\precision}-1, 2^{\precision}-1]$ corresponds to coefficients being bounded by $[-1,1]$ after we scale down.
We empirically choose $\precision=10$ for all polynomials during the ML training. 
As we observe in Appendix~\ref{app:poly_fit}, our quantized polynomial fitting addresses the problems of exploding activations within the range. 
However, the issue of going out-of-range requires additional treatment.

\paragraph{Activation Regularization.}
Following the observations of Section~\ref{sec:poly_blow_up} and Garimella et al.~\cite{garimellaSisyphusCautionaryTale2021}, it is clear that minimizing the classification loss alone is not sufficient to prevent escaping activations. 
Garimella et al.~proposed QuaIL, a method that trains one layer of the model at a time, focusing not on classification accuracy but the similarity of the layer to a standard ReLU model~\cite{garimellaSisyphusCautionaryTale2021}. 
QuaIL showed much better accuracy than naive training but only scaled to models with at most $11$ layers.

In our work, we address the cause of the problem directly by regularizing the input to each activation function during training. 
We add an exponential penalty to the loss function when the model inputs out-of-range values to the polynomial activation function.
Let $\base$ be the input to the activation function, and $\regRange$ be the upper bound of the symmetric range $[-\regRange, \regRange]$ in which we would like the input to be contained. 
Then, we define our penalty function as
\begin{equation}\label{eq:penalty_function}
 p(x) = \left(\frac{x}{\regRange}\right)^\penExp
\end{equation}
where $\penExp$ is a large even number (to handle negative values) determining the severity of the penalty. 
We find that values between six and ten work best in practice, with $\penExp=10$ being the default in our experiments. 
This penalty function gives negligible penalties (less than 1) for $|x|<\regRange$ and rapidly grows (in the degree of $\penExp$) as $|x|>\regRange$.

We aggregate $p(x)$ over $\mathcal{I}$, the set of inputs to all activation functions, by taking the average over each activation layer in the model.
After aggregation, we scale the penalty using a regularization coefficient $\regCoef$ and add it to the existing cross-entropy loss function of the model $\ell_c$. 
Specifically, the modified loss function $\ell'$ is defined as:
\begin{equation}\label{eq:loss}
    \ell'(\cdot) = \ell_c(\cdot) + \frac{\regCoef}{K}\sum\limits_{x \in \mathcal{I}} p(x)
\end{equation}
where $K$ is the number of activation layers in the model.
This allows us to tune the importance of classification loss vs.~the cost of going out-of-range. 

\paragraph{Clipping.}
Although activation regularization teaches the model not to go out-of-range over time, the model still needs to avoid going to infinity during the early stages of training. 
Thus, during training, we apply a clipping function to the input of the activation function such that if any input goes out of range, it is truncated to the range's maximum (or minimum) value. 
This clipping function does not affect the penalty as it is applied after the penalty function has been computed. 
We emphasize that this clipping function is only used during training and is removed during inference. 
The intuition is that the model should learn not to go out-of-range during training and thus no longer requires this clipping function during inference. 
Additionally, we find that setting the $\regRange$ of the penalty to be smaller than the range used for polynomial fitting (and clipping) can yield even better results. This is because the polynomial will be accurate for a larger range outside of the range the model was regularized to stay inside, allowing an extra buffer in case of failure during inference.

\begin{table}[t]
\centering
% \begin{tabular}{cll}
% \toprule
% \multicolumn{1}{l}{Dataset} & Model     & Plain Accuracy $\pm$ CI  \\ \midrule
% \multirow{4}{*}{Cifar10}      & MiniONN   & 88.1 $\pm$ 0.26   \\ 
%                               & VGG 16    & 90.8 $\pm$ 0.11         \\ 
%                               & ResNet18  & 93.4 $\pm$ 0.14   \\ 
%                               & ResNet110 & 91.4 $\pm$ 0.18   \\ \cmidrule(lr){1-3}
% \multirow{3}{*}{CIFAR-100}    & VGG 16    & 66.3 $\pm$ 0.22   \\ 
%                               & ResNet32  & 67.8 $\pm$ 0.32   \\ 
%                               & ResNet18  & 74.9 $\pm$ 0.14   \\ \cmidrule(lr){1-3}
% ImageNet                      & ResNet50  & 77.7      \\ \bottomrule
% \end{tabular}
\begin{tabular}{cllcc}
\toprule
\multicolumn{1}{l}{Dataset} & Model     & \multicolumn{2}{c}{Plain Accuracy $\pm$ CI} & \\
\cmidrule(lr){3-4}
                                &           & ReLU & PILLAR \\
\midrule
\multirow{4}{*}{Cifar10}      & MiniONN   & 91.2 $\pm$ 0.17 & 88.1 $\pm$ 0.26 \\
                                & VGG 16    & 92.6 $\pm$ 0.16 & 90.8 $\pm$ 0.11 \\
                                & ResNet18  & 94.7 $\pm$ 0.09 & 93.4 $\pm$ 0.14 \\
                                & ResNet110 & 92.8 $\pm$ 0.27 & 91.4 $\pm$ 0.18 \\
\cmidrule(lr){1-5}
\multirow{3}{*}{CIFAR-100}    & VGG 16    & 70.9 $\pm$ 0.17 & 66.3 $\pm$ 0.22 \\
                                & ResNet32  & 68.4 $\pm$ 0.46 & 67.8 $\pm$ 0.32 \\
                                & ResNet18  & 76.6 $\pm$ 0.07 & 74.9 $\pm$ 0.14 \\
\cmidrule(lr){1-5}
ImageNet                      & ResNet50  & 80.8 & 77.7 \\ \bottomrule
\end{tabular}
\caption{Plain-text Accuracy of \polytrick ~(5 runs).}\label{tab:plain_acc}
\end{table}
\paragraph{Regularization Warm-up.}
We find the minimum requirements for successfully training a model with polynomial activation functions are activation regularization and clipping. 
However, for larger models, the penalty term can be extremely large in the first few epochs (until the model learns to stay in range). 
In some cases, the loss can become infinite due to our regularization penalty. 
To address this challenge, we adopt a regularization scheduler for the first four epochs that slowly increases both $\penExp$ and $\regCoef$ to the values used for the rest of the training. 
Empirically, the following schedule works well and avoids infinite loss. 
We let $\penExp' \in \{4,6,\dots, \penExp, \penExp, \dots\}$ and $\regCoef' \in \{\regCoef/100, \regCoef/50, \regCoef/10, \regCoef/5, \regCoef, \regCoef,\dots \}$.

\paragraph{BatchNorm Layers.}
Garimella et al. also investigated using normalization to help prevent the escaping activation functions~\cite{garimellaSisyphusCautionaryTale2021}. 
They proposed a min-max normalization approach where each layer's minimum and maximum values are approximated using a weighted moving average of the true minimum and maximum. 
These values are frozen during inference. 
Garimella et al. observed that this approach alone was insufficient, as activations still escaped the range during inference. 
We observe this operation is similar to the batch norm layer commonly added to ML models. 
The main difference is that the mean and standard deviation of the batch are used to normalize the layer instead of the minimum and maximum values. 
By fixing the approximation of the mean and standard deviation during inference (following CrypTen~\cite{knottCrypTenSecureMultiParty2021}), this operation is very efficient in MPC. 
We study the effect of BatchNorm in Appendix~\ref{app:batchnorm}.
We find that batch norm layers considerably improve the accuracy of \polytrick.  
This is an intuitive result as batch normalization helps to keep each layer's output bounded and thus reduces the work of our regularization function.

\paragraph{Summary of PILLAR}
We refer to \polytrick as the combination of all components described in this section. We note that the clipping and regularization warm-up components are used (only during training) to enable regularization by preventing the model from going to infinity. The regularization component ensures that the model will stay in range during inference. All three of these components play a crucial role in the success of \polytrick, as removing any of them will result in a model with unacceptable accuracy (due to infinite weights or escaping activations). The batch norm is the only optional component, which we give a small ablation study over in Appendix~\ref{app:batchnorm}.
\subsection{Measuring PILLAR's Effectiveness}

\paragraph{The Regularization Coefficient.}
To show the effect of our regularization and coefficient $\regCoef$, we conduct an experiment using the same three-layer model on CIFAR-10 from Section~\ref{sec:poly_blow_up}. 
In Figure~\ref{fig:OOR_coeff}, the left $y$-axis gives the out-of-range ratio (OOR), defined as the ratio of activation function inputs that were not within the interval $[-5,5]$.
The right $y$-axis is standard classification accuracy, and the $x$-axis varies the regularization coefficient $\regCoef$.
We observe that when the coefficient, $\regCoef$, is small, the model goes out-of-range often and thus has poor accuracy. 
As we increase $\regCoef$, the out-of-range ratio decreases, and accuracy increases. 
However, if we increase the coefficient too much, the accuracy decreases again.

\paragraph{End-to-end Accuracies.}
We evaluate \polytrick~across a range of different models and architectures considered in related work~\cite{hussainCOINNCryptoML2021,ngGForceGPUFriendlyOblivious2021,tanCryptGPUFastPrivacyPreserving2021}. We summarize the results in Table~\ref{tab:plain_acc}. We include the accuracy of a model trained with standard ReLUs as a baseline. All results are averaged over five random seeds, and we show the $95\%$ confidence interval. The only exception is ResNet50 on ImageNet, where we only train a single model due to the size of the dataset. We defer to Section~\ref{sec:experiments} for the details of the experimental setup.
We note that these results are using PyTorch with no cryptography or quantization. We give a complete evaluation using MPC in Section~\ref{sec:experiments} where quantization has an effect. Table~\ref{tab:plain_acc} provides preliminary evidence that our polynomial training approach yields high accuracies competitive with state-of-the-art ReLU models across a range of models and datasets.

%-------------------------------------------------------------------------------
\section{Evaluation of Co-Design}\label{sec:experiments}
%-------------------------------------------------------------------------------
\begin{figure}[t]
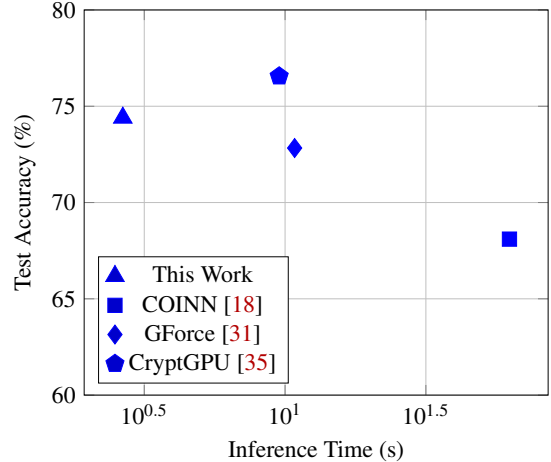

\centering
    \plotscatter{}{csvs/cifar100wan.csv}{60}{80}{south west}
    \caption{Summary of the inference time vs accuracy for each state-of-the-art approach on the CIFAR-100 dataset in the WAN (100 ms roundtrip delay).}\label{fig:cifar_100_summary}
\end{figure}
In this section, we provide an end-to-end comparison of our co-design against state-of-the-art solutions in secure inference. 
We evaluate the performance of \polytrick~and Algorithm~\ref{alg:poly_eval} using both \binotrick~and \honeytrick~ as they offer different trade-offs in the type of pre-computation needed and the size of the communication.
We determine the state-of-the-art works following a recent SoK by Ng and Chow~\cite{ngSoKCryptographicNeuralNetwork2023}.
Specifically, we consider three solutions on the Pareto front of latency and accuracy as determined by Ng and Chow.
These works are COINN~\cite{hussainCOINNCryptoML2021}, GForce~\cite{ngGForceGPUFriendlyOblivious2021} and CrypTen (CryptGPU)~\cite{knottCrypTenSecureMultiParty2021,tanCryptGPUFastPrivacyPreserving2021}. 
We also evaluate Cheetah~\cite{Haung_Cheetah_22}, a recent work not included in the SoK.
We will evaluate the metrics of latency (or runtime of a single sample) and encrypted accuracy.
We additionally evaluate the communication and number of rounds in Appendix~\ref{app:comm_rounds}.
We begin with the experimental setup, then evaluate both metrics (runtime and accuracy) against each related work.
Section~\ref{sec:res_18} evaluates the ResNet-18 architecture, which gives our state-of-the-art performance. Section~\ref {sec:gforce_comp} evaluates the VGG-16 architecture, the only architecture GForce evaluates. Section~\ref {sec:coinn_comp} considers other ResNets and the MiniONN architecture following COINN. Finally, in Section~\ref{sec:imagenet}, we evaluate ImageNet against \cheetah, \coinn, and \cryptgpu.

\paragraph{Results Summary.}

We plot a summary of the accuracy and inference time for CIFAR-100 in Figure~\ref{fig:cifar_100_summary}. For both datasets (recall Figure~\ref{fig:cifar_10_summary} ), we observe that our work always gives the solution with the fastest inference time by a statistically significant amount. In terms of accuracy, our work is competitive with the state-of-the-art, but \cryptgpu~is always the most accurate as it can infer unmodified plaintext models.
Our solution is faster than \cryptgpu~by $4\times$, \gforce~by $5\times$, and \coinn~by $18\times$ on average in wide area networks. 
Our accuracies are competitive with state-of-the-art and plaintext solutions and stay stable (no escaping activations) with models containing up to $110$ layers and $23$ million parameters.

\subsection{Experimental Setup}\label{sec:exp_setup}
We develop an experimental setup that follows as closely as possible to the works we compare to~\cite{hussainCOINNCryptoML2021,ngGForceGPUFriendlyOblivious2021,tanCryptGPUFastPrivacyPreserving2021}. We use CIFAR-10/100~\cite{CIFAR} and ImageNet~\cite{dengImageNetLargescaleHierarchical2009}, the same common benchmark datasets as related work. Our model architectures include: MiniONN~\cite{liuObliviousNeuralNetwork2017}, VGG~\cite{simonyanVeryDeepConvolutional2015}, and ResNets~\cite{heDeepResidualLearning2016}.  This covers models of depth $7$ to $110$ layers with the number of trainable parameters ranging from $0.2$ to $23$ million. %See Appendix~\ref{app:data_models} for further details on architectures and datasets.

\paragraph{Implementation Details.}
All experiments are run on a machine with 32~CPU cores @ 3.7~GHz and 1~TB of RAM with two NVIDIA A100 with 80~GB of memory.
We simulate network delay by calling the sleep function for the appropriate time whenever the client and server communicate. We simulate the LAN with 0.25~ms roundtrip delay and the WAN with 100~ms, following COINN~\cite{hussainCOINNCryptoML2021}. We additionally evaluate a real WAN using AWS instances in Section~\ref{sec:real_wan}. All experiments (except ImageNet) are repeated over multiple random seeds, and we report the mean and 95\% confidence interval as shaded areas.

For all related work, we run our own benchmarks of their code unmodified.
While results for ResNet32 and MiniONN appear in the \cheetah~paper, there was no source code for these models so we only evaluate \cheetah~on ImageNet.
To use \cryptgpu~in practice, one must first train a model in PyTorch. For ImageNet, PyTorch provides pre-trained models. However, we will need to train a model for all other architectures and datasets. We simply use the same configurations as our PolyRelu models but with standard ReLUs. We include all source code to reproduce our results ~\cite{code}.

\paragraph{Hyperparameters.}
We introduce five new hyperparameters associated with our techniques: polynomial degree (\degree), polynomial approximation range (\range), polynomial regularization range (\regRange), polynomial regularization coefficient (\regCoef), and polynomial regularization exponent (\penExp). The default values for each are decided by extensive grid searches. These parameters primarily affect the accuracy and not the inference time, except for the polynomial degree (\degree), which has a minor effect on the communication size (but not the rounds).
We found the optimal quantization-aware polynomial degree (\degree) to be 4. We found that higher degrees than four increase the failure probability of truncation with minimal accuracy gains (Section~\ref{sec.poly_eval}). Conversely, lower degrees give lower accuracy (as shown in Figure~\ref{fig:accuracy_vs_degree}). The polynomial used in all evaluations is:  $0.31445312 + 0.5x + 0.15625x^2 - 0.00292969x^4$. 

For all models and datasets, we found a value of $\penExp=10$ performs well as it introduces a strong enough incentive for PolyReLU inputs to stay in range (smaller values did not) while keeping penalization for values in range practically $0$ (if an input $x$ is within range, then $ \frac{x}{\regRange} < 1 \Rightarrow (\frac{x}{(\regRange)^{10}} \sim 0 $). Larger values than this often lead to an infinite penalty term. 
We found that a polynomial approximation range $\range=5$ provides a good compromise between regularization and quantization. We found that smaller values of $\range$ destroy the accuracy during training, and larger values use too much precision, increasing the failure probability. 
The optimal polynomial regularization range (\regRange) and polynomial regularization coefficient (\regCoef) vary per model and dataset, although we found $\regRange=4.8$ (slightly tighter than $\range=5$) and $\regCoef=5\times 10^{-5}$ to be good default values. 
We use $\precision=10$ during ML training, but revert to CrypTen's default precision of $\precision=16$ during inference in a $64$-bit ring.

We used Stochastic Gradient Decent as the optimizer with a learning rate of $0.013$ as the default. This includes a Cosine Annealing Learning Rate Scheduler with Linear Learning Rate Warmup of 5 epochs and decay $0.01$. We use a weight decay of either $10^{-4}$ or $5\times 10^{-4}$ and a momentum of $0.9$. We used a default batch size of $128$ and set the default number of Epochs to $185$. For some models, we tuned the learning rate, number of epochs, and regularization coefficient to achieve a slightly higher accuracy. We detail hyperparameters in our source code repository.
\begin{figure}[t]
 \centering
 \includegraphics[width=0.70\columnwidth]{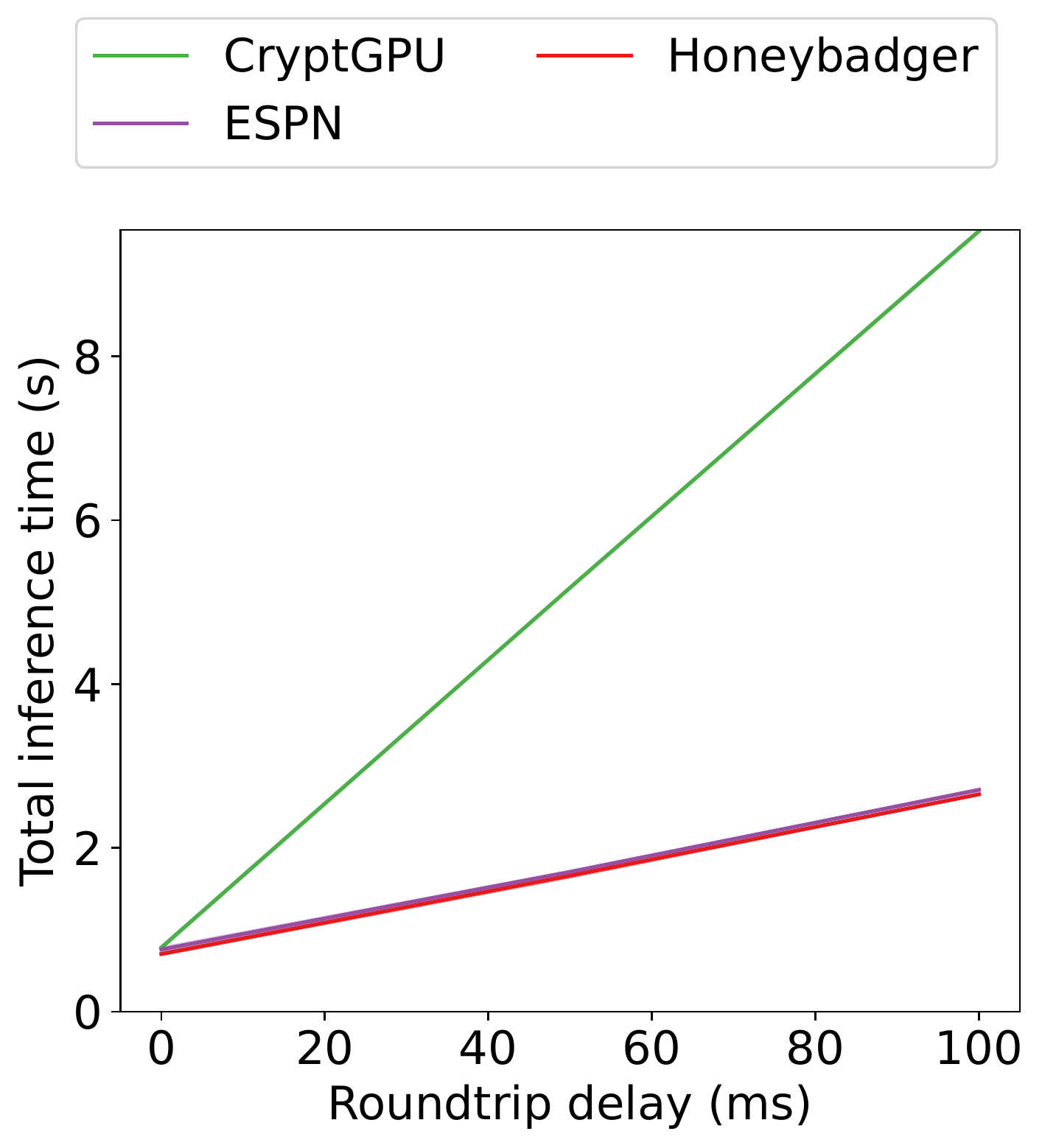}
 \caption{ResNet-18 evaluation on CIFAR-100 (20 runs).}
 \label{fig:CryptGPU}
\end{figure}
\subsection{ResNet-18 Architecture}\label{sec:res_18}
In this section, we use a ResNet-18 architecture as it is the architecture that yields the best inference time and accuracy over all CIFAR-10 and CIFAR-100 experiments. 
For this comparison we focus on \cryptgpu, which has been shown to be a state-of-the-art solution~\cite{ngSoKCryptographicNeuralNetwork2023}.
\cryptgpu~(or CrypTen)~\cite{tanCryptGPUFastPrivacyPreserving2021} serves as a baseline in all our comparisons including those against \gforce~and \coinn~in Sections~\ref{sec:gforce_comp} and~\ref{sec:coinn_comp}.
Neither COINN nor GForce support the ResNet-18 architecture evaluated in this section. 

\paragraph{Inference Time.}
We measure the inference time of a single input image over varying network delays. The results are given in Figure~\ref{fig:CryptGPU}. We include the result for CIFAR-100 and omit the plot for CIFAR-10 as it displays similar trends. We observe that both \polytrick~+ \honeytrick~and \polytrick~+ \binotrick~outperform \cryptgpu~with statistical significance across all roundtrip delays (as the shaded area does not overlap). In the WAN (100 ms), this corresponds to a $4\times$ speedup over \cryptgpu. Furthermore, we find \honeytrick~and \binotrick~perform similarly as observed in Section~\ref{sec:crypto}, with \polytrick~+ \honeytrick~having a slight advantage.

\paragraph{Accuracy.}
We measure the accuracy of the models on the testing set both in plain (using PyTorch) and encrypted (using CrypTen).
We give the result in Table~\ref{tab:cryptgpu}. First, we observe the plain and encrypted accuracies are very similar, indicating that quantization has a minor effect despite not considering this in training. We find that \cryptgpu~and \polytrick~give similar accuracies, with \cryptgpu~performing slightly better, as is to be expected since they use unmodified activation functions. However, we argue this slight loss in accuracy is well justified by the significant decrease in inference time.
\begin{table}[ht!]
    \begin{tabular}{llll}
        \toprule
    Dataset                    & Technique                 & Plain Acc                    & Enc Acc                       \\ \midrule
    \multirow{2}{*}{CIFAR-10}  & \polytrick & 93.4 $\pm$ 0.14 & 93.3 $\pm$ 0.22 \\
                               & \cryptgpu  & 94.7 $\pm$ 0.09 & 94.6 $\pm$ 0.10 \\ \cmidrule(lr){1-4}
    \multirow{2}{*}{CIFAR-100} & \polytrick & 74.9 $\pm$ 0.14 & 74.4 $\pm$ 0.31 \\ 
                               & \cryptgpu  & 76.6 $\pm$ 0.07 & 76.6 $\pm$ 0.13
    \\ \bottomrule
    \end{tabular}
    \caption{ResNet-18 accuracy comparison (5 runs).}\label{tab:cryptgpu}
\end{table}

\begin{figure}[t!]
 \centering
 \includegraphics[width=0.70\columnwidth]{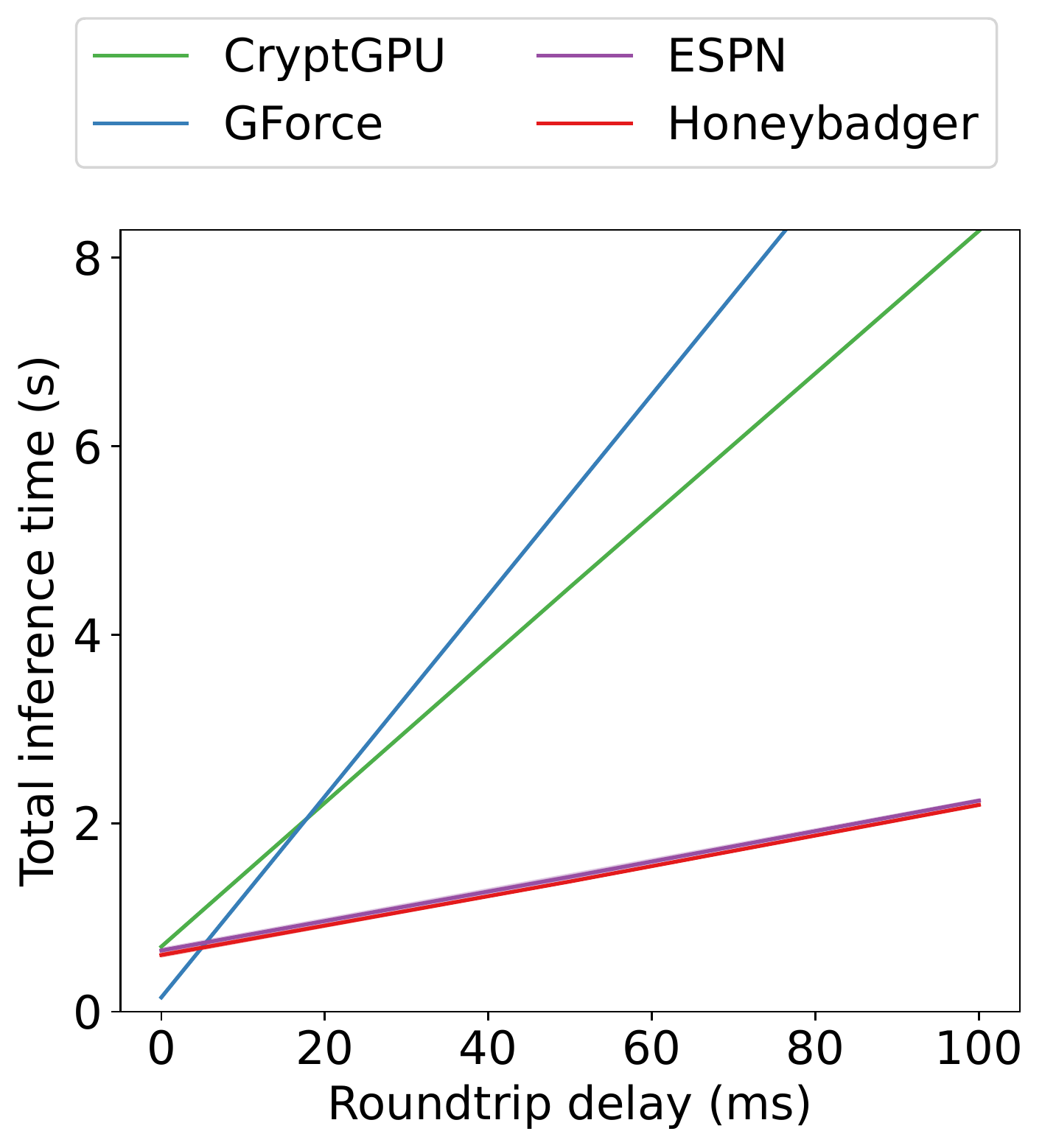}
 \caption{VGG-16 evaluation on CIFAR-100 (20 runs).}
 \label{fig:gforce}
\end{figure}

\begin{figure*}[t!]
 \subfigure[MiniONN on CIFAR-10]{
 \includegraphics[height=\plotheight]{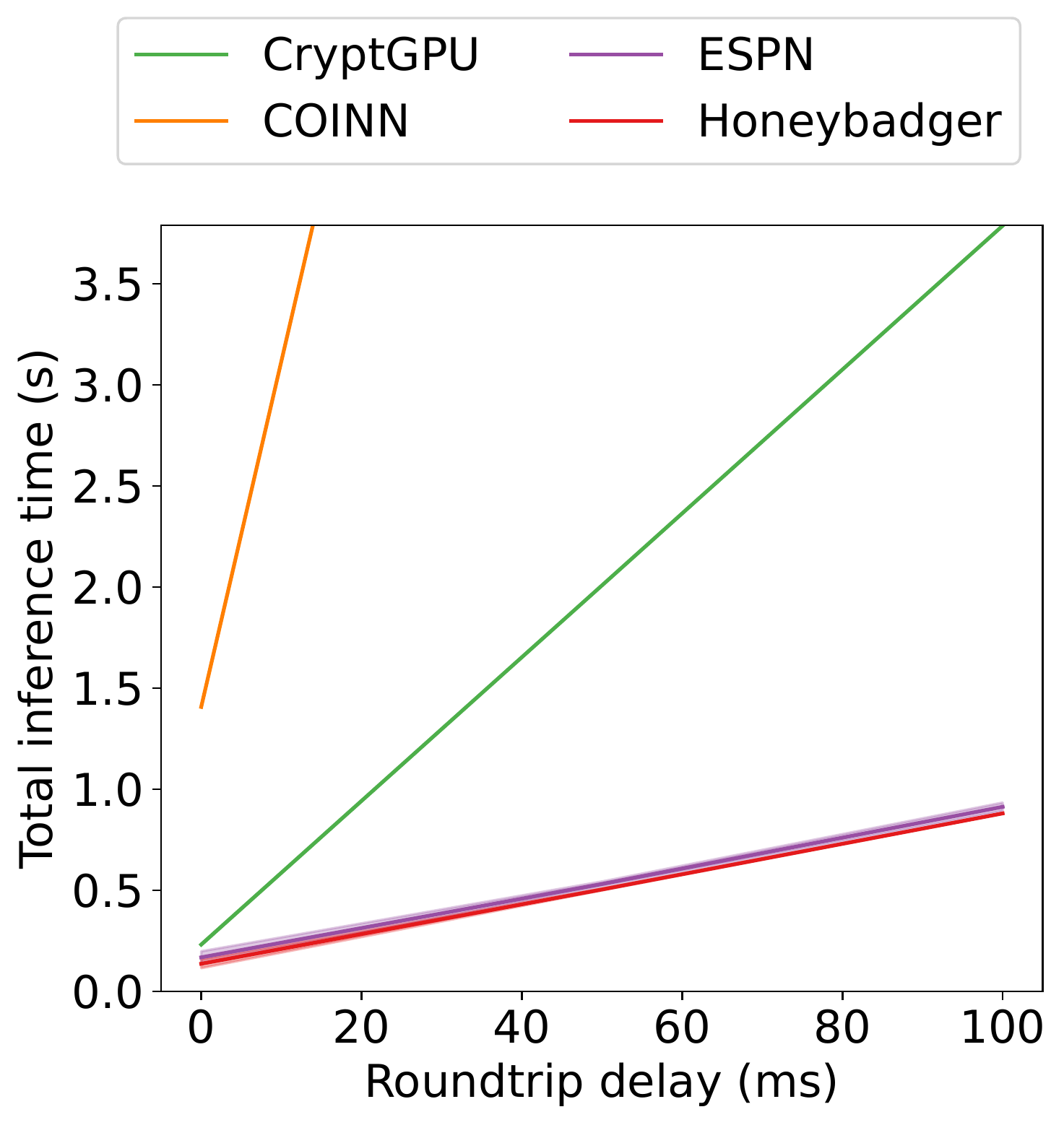}
 \label{fig:coinn_mini}}%
 \hfill
 \subfigure[ResNet-32 on CIFAR-100]{
 \includegraphics[height=\plotheight]{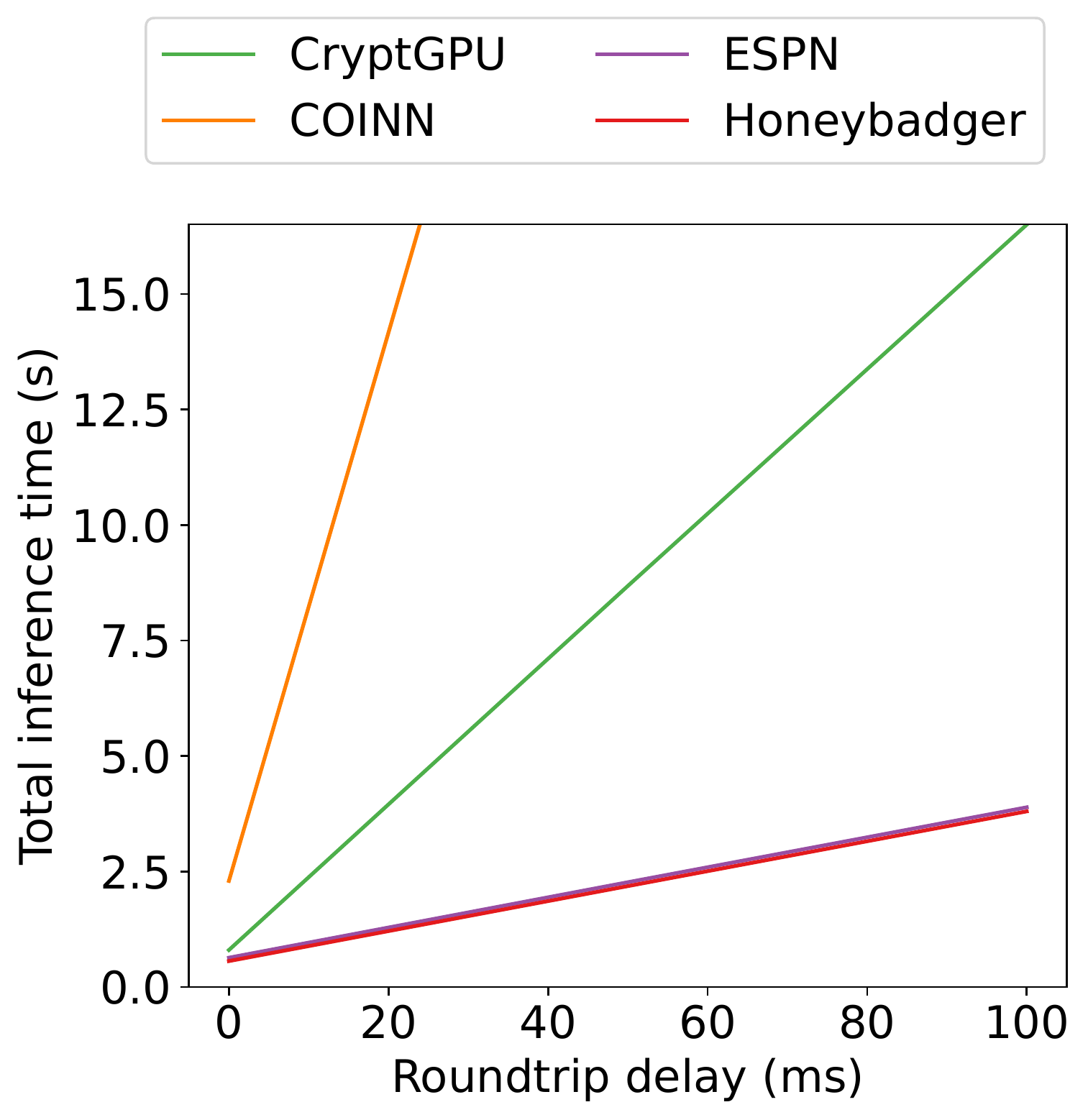}
 \label{fig:coinn_res32}}
 \hfill
 \subfigure[ResNet-110 on CIFAR-10]{
 \includegraphics[height=\plotheight]{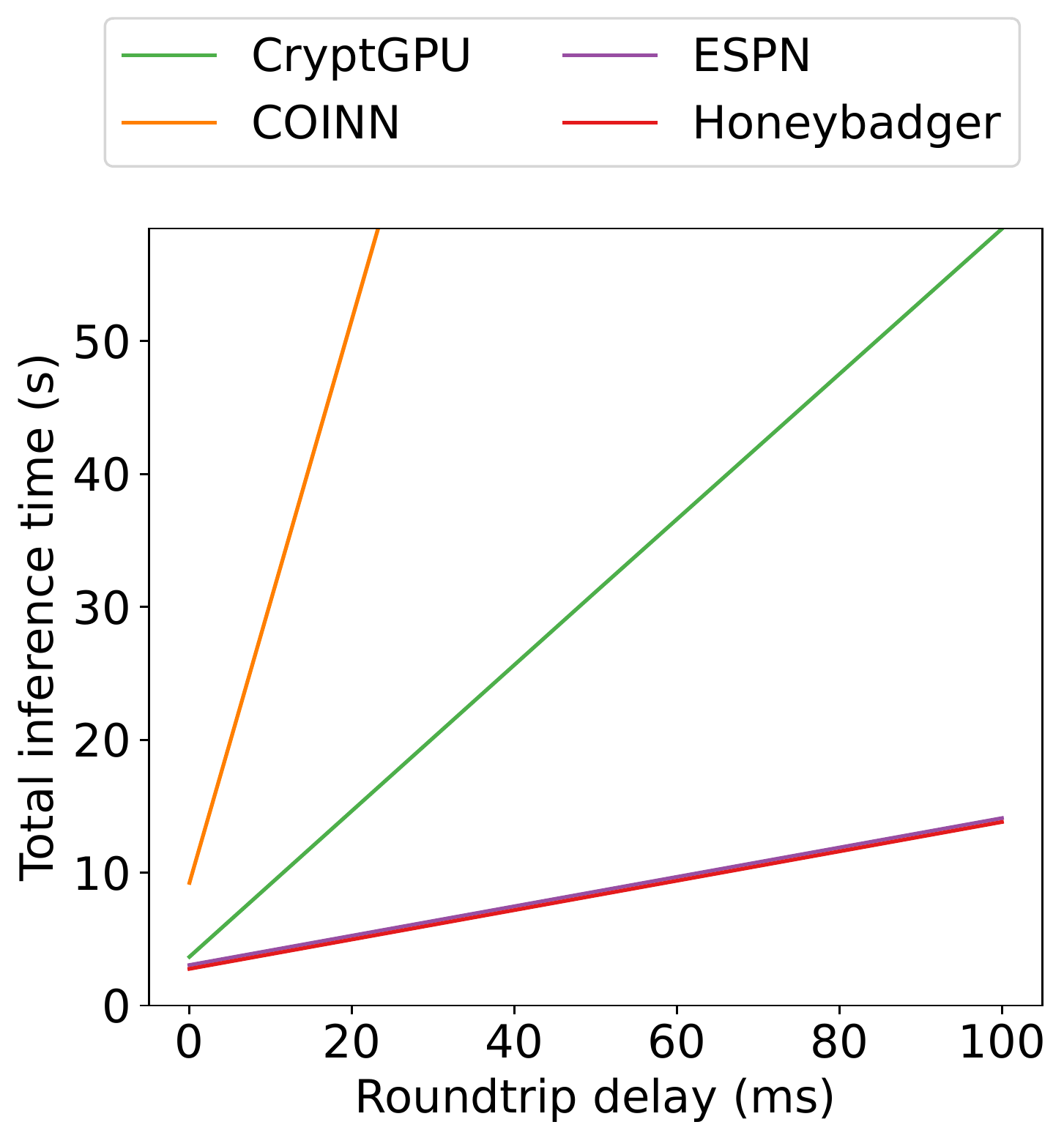}
 \label{fig:coinn_res110}}
 \caption{Evaluating the various COINN architectures (20 runs).}
 \label{fig:coinn}
\end{figure*}
\subsection{VGG-16 Architecture}\label{sec:gforce_comp}
In this section, we compare with GForce, the current state-of-the-art as shown by Ng and Chow~\cite{ngSoKCryptographicNeuralNetwork2023}. GForce focused on a modified VGG-16~\cite{simonyanVeryDeepConvolutional2015} architecture and compared it to all other works (including those using different architectures). For completeness, we evaluate the VGG-16 architecture using our techniques, and \cryptgpu~although we note that the ResNet-18 architecture outperforms VGG-16 in both inference time and accuracy.
COINN~\cite{hussainCOINNCryptoML2021} does not give results for VGG-16, so we exclude it from this section.

\paragraph{Inference Time.}
Since our work aims to reduce the rounds needed by binary non-linear layers, we replace the MaxPool layers in the VGG-16 with AvgPool for all solutions (including \cryptgpu~and GForce). 
We give the inference times over various delays in Figure~\ref{fig:gforce}.
First, we note that GForce significantly outperforms all other solutions in the LAN.
However, for more realistic high latency networks (>5ms roundtrip delay), we observe our solutions significantly outperform GForce ($5\times$ speedup in WAN).
Once again, our solutions outperform \cryptgpu~for all network delays.

\paragraph{Encrypted Accuracy.}
We recall that we swap the MaxPool layers for AvgPool layers in the inference time evaluation. This comes at a cost to accuracy for the VGG architecture. Thus, to give the best scenario possible for GForce, we consider the accuracy of GForce with MaxPools and our work with AvgPool. 
We give the results in Table~\ref{tab:gforce}.
As expected, GForce outperforms our work in accuracy (due to the MaxPools); however, only by a few percentage points.
We train \cryptgpu~to use AvgPool and find it also loses a few percentage points, confirming our hypothesis that a MaxPool is necessary for high accuracy in VGG-16.
We emphasize that our ResNet-18 result outperforms GForce's VGG result in inference time and accuracy.
Furthermore, ResNets are a more popular and compact architecture due to skip-connections.
\begin{table}[H]
    \begin{tabular}{llll}
        \toprule
    Dataset                    & Technique  & Plain Acc & Enc Acc \\
    \midrule
    \multirow{3}{*}{CIFAR-10}  & \polytrick & 90.8 $\pm$ 0.11 & 90.8 $\pm$ 0.14 \\
                               & \cryptgpu  & 92.6 $\pm$ 0.16 & 92.5 $\pm$ 0.16 \\
                               & \gforce    & -         & 93.12       \\\cmidrule(lr){1-4}
    \multirow{3}{*}{CIFAR-100} & \polytrick & 66.3 $\pm$ 0.22 & 66.3 $\pm$ 0.32 \\
                               & \cryptgpu  & 70.9 $\pm$ 0.17 & 70.8 $\pm$ 0.13 \\
                               & \gforce    & -         & 72.83     
    \\ \bottomrule
    \end{tabular}
    \caption{VGG-16 accuracy comparison (5 runs).}\label{tab:gforce}
\end{table}

\subsection{Other Architectures}\label{sec:coinn_comp}
While GForce is the current state-of-the-art, COINN is a competitive solution that evaluates ResNet architectures. Thus, we also evaluate the same configurations as COINN. This includes the smaller MiniONN architecture, a ResNet-32, and a ResNet-110.  We exclude GForce from this evaluation as they only evaluate VGG-16 models.

\paragraph{Inference Time.}
We again swap all MaxPool layers for AvgPool in our work, and \cryptgpu~but leave COINN unmodified.
We give the results in Figure~\ref{fig:coinn}.
We observe that, over each of the three increasingly large architectures, the trends are similar and proportional to the number of parameters (0.2, 0.5, and 1.7 million parameters for MiniONN, ResNet-32, and ResNet-110, respectively). Across all architectures and network delays, our work outperforms COINN by a statistically significant amount ($18\times$ on average in WAN).
We once again outperform \cryptgpu~in all evaluations\footnote{All of which are statistically significant except MiniONN in LAN where the confidence intervals overlap slightly.} with a $4\times$ speed up on average in the WAN.

\paragraph{Encrypted Accuracy.}
We give the results in Table~\ref{tab:coinn}.
We observe that \polytrick~is competitive with related work in all models, although we remark that, once again, our ResNet-18 models outperform all others. We also note that, while COINN does quantization-aware training, \polytrick~does not and still only loses a small amount of accuracy in encryption vs. plaintext.
\begin{table}[H]
    \begin{tabular}{llll}
        \toprule
    Dataset/Model                          & Technique  & Plain Acc     & Enc Acc        \\ \midrule
    \multirow{3}{*}{\begin{tabular}[c]{@{}l@{}}CIFAR-10 / \\ MiniONN\end{tabular}}   & \polytrick & 88.1 $\pm$ 0.26 & 87.9 $\pm$ 0.46 \\
                                           & \cryptgpu  & 91.2 $\pm$ 0.17 & 91.2 $\pm$ 0.16 \\
                                           & \coinn     & -             & 87.6              \\ \cmidrule(lr){1-4}
    \multirow{3}{*}{\begin{tabular}[c]{@{}l@{}}CIFAR-10 / \\ ResNet-110\end{tabular}}& \polytrick & 91.4 $\pm$ 0.18 & 91.4 $\pm$ 0.25  \\
                                           & \cryptgpu  & 92.8 $\pm$ 0.27 & 92.7 $\pm$ 0.26 \\
                                           & \coinn     & -             & 93.4              \\ \cmidrule(lr){1-4}
    \multirow{3}{*}{\begin{tabular}[c]{@{}l@{}}CIFAR-100 / \\ ResNet-32\end{tabular}} & \polytrick & 67.8 $\pm$ 0.32 & 67.8 $\pm$ 0.47 \\
                                           & \cryptgpu  & 68.4 $\pm$ 0.46 & 68.5 $\pm$ 0.45 \\
                                           & \coinn     & -             & 68.1             
    \\ \bottomrule
    \end{tabular}
    \caption{Accuracy comparison on the various architectures considered in \coinn~(5 runs).}\label{tab:coinn}
    \end{table}

    \subsection{Scaling to ImageNet}\label{sec:imagenet}
    In this section, we evaluate the scalability of our approach on the ImageNet dataset using a ResNet-50 architecture with 23 million parameters. This architecture was previously too large for training with polynomial activation functions~\cite{garimellaSisyphusCautionaryTale2021}. 
    We compare our approach to \cheetah, \coinn~and \cryptgpu and exclude GForce as they do not consider ImageNet. 
    
    \paragraph{Inference Time.}
    We plot the inference time in Figure~\ref{fig:imagenet}. We observe a significant reduction over \cheetah~and \coinn~across all network delays. We outperform \cheetah~by $39\times$ in the LAN (0.25 ms) and $15\times$ in the WAN (100 ms). Over \coinn, we observe a $28\times$ reduction in the LAN and a $90\times$ reduction in the WAN. Compared to \cryptgpu~we find that \polytrick~+ \honeytrick~is the fastest in all network delays by $3\times$ on average. \polytrick~+ \binotrick~is slightly slower in the LAN, but once again outperforms \cryptgpu~in the WAN.
    \begin{figure}
     \centering
     \includegraphics[width=0.7\columnwidth]{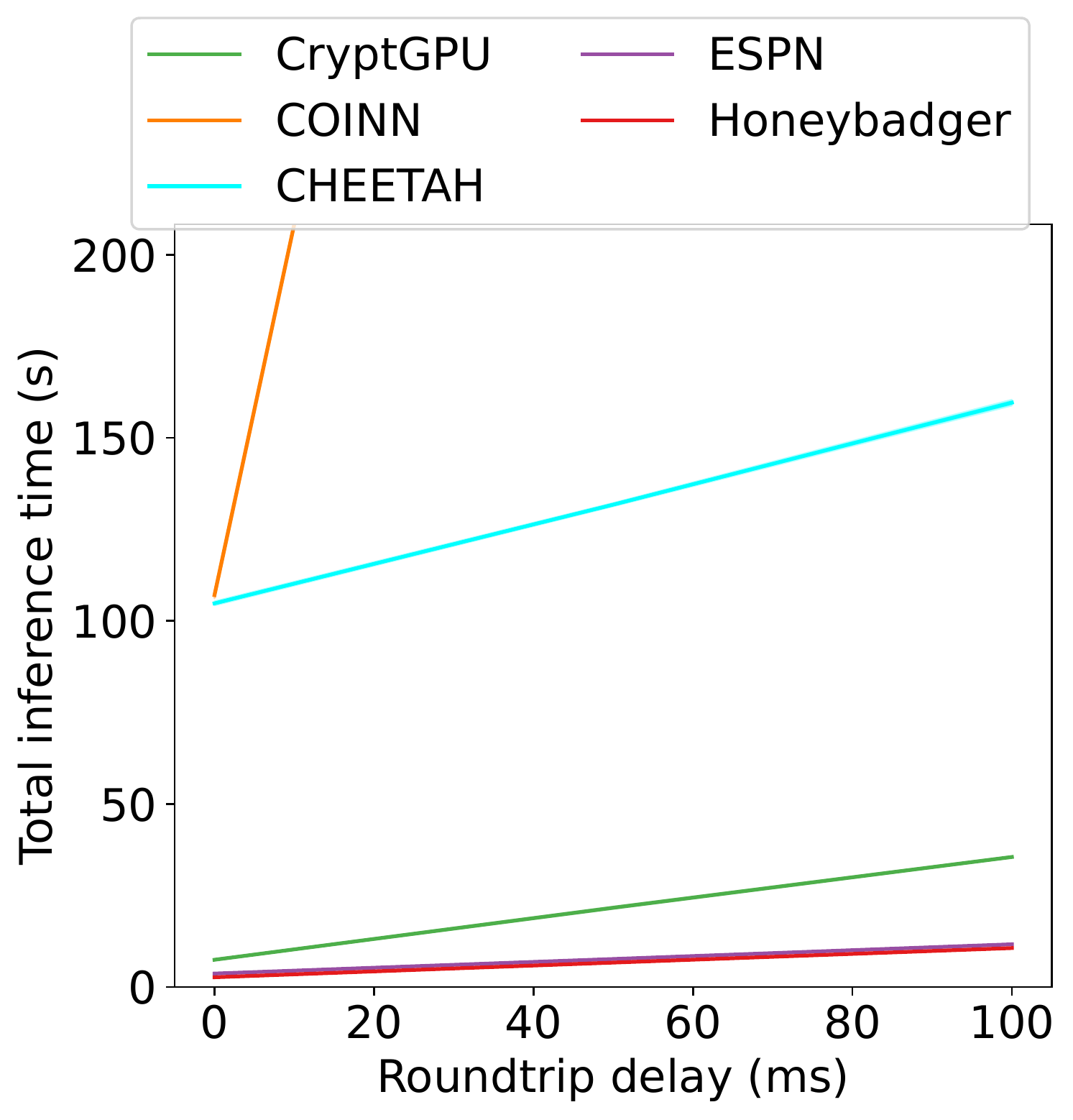}
     \caption{ImageNet evaluation on ResNet-50 (20 runs).}
     \label{fig:imagenet}
    \end{figure}
    \paragraph{Encrypted Accuracy.}
    We present a summary of the accuracies in Table~\ref{tab:imagenet}. We note that \cheetah~\cite{Haung_Cheetah_22} did not measure accuracy in either their code or paper so we omit them from this comparison. We observe a much higher encrypted accuracy for \polytrick~compared to \coinn~and thus, our solution is Pareto dominant. For \cryptgpu, we use a pre-trained PyTorch model with state-of-the-art accuracy. Therefore, as expected, \cryptgpu~has an accuracy $3\%$ higher than the model we trained from scratch. We note that with a higher degree polynomial, we were able to train a $79.2\%$ polynomial model. However, this model is not possible to infer in the $64$-bit field used by CrypTen (as higher degrees need more precision by equation~\ref{eq:precision}). We discuss future directions to further improve this result in Section~\ref{sec:discuss}.
    \begin{table}[H]
     \centering
     \begin{tabular}{lll}
     \toprule
     Technique & Plain Acc & Enc Acc \\ \midrule
     \polytrick & 77.7 & 77.3 \\
     \cryptgpu & 80.8 & 80.8 \\
     \coinn & - & 73.9 
     \\ \bottomrule
     \end{tabular}
     \caption{ImageNet accuracy comparison (1 run).}\label{tab:imagenet}
    \end{table}

\subsection{Real WAN Evaluation}\label{sec:real_wan}
\begin{table}
\centering
\begin{adjustbox}{width=\columnwidth}

\begin{tabular}{llll}
\toprule
Dataset / Model                                                       & \binotrick & \honeytrick & \cryptgpu \\ \midrule
\begin{tabular}[c]{@{}l@{}}CIFAR-10 / ResNet-110\end{tabular} & 49.3 $\pm$ 0.1               & 49.0 $\pm$ 0.1                & 242.5 $\pm$ 0.9             \\ \cmidrule(lr){1-4}
\begin{tabular}[c]{@{}l@{}}CIFAR-10 / ResNet-18\end{tabular}  & 15.3 $\pm$ 0.2               & 12.9 $\pm$ 0.1                & 48.9 $\pm$ 0.1              \\ \cmidrule(lr){1-4}
\begin{tabular}[c]{@{}l@{}}CIFAR-100 / ResNet-18\end{tabular} & 15.4 $\pm$ 0.2               & 12.9 $\pm$ 0.1                & 48.9 $\pm$ 0.5              \\ \cmidrule(lr){1-4}
\begin{tabular}[c]{@{}l@{}}CIFAR-100 / ResNet-32\end{tabular} & 14.1 $\pm$ 0.1               & 14.0 $\pm$ 0.1                & 75.4 $\pm$ 1.6              \\ \cmidrule(lr){1-4}
\begin{tabular}[c]{@{}l@{}}ImageNet / ResNet-50\end{tabular}  & 153.9 $\pm$ 1.0              & 104.9 $\pm$ 0.8               & 268.67 $\pm$ 1.3            \\ \bottomrule
\end{tabular}
\end{adjustbox}
\caption{Real World WAN evaluation. We report total inference time in seconds. }
\label{tab:amazon}
\end{table}
This section gives benchmarks for \binotrick, \honeytrick, and CryptGPU~\cite{tanCryptGPUFastPrivacyPreserving2021} in a real WAN. 
We use two AWS EC2 \texttt{g4dn.metal} instances, one in the Ohio data centre and one in Frankfurt, Germany. Each machine has $96$ cores, $384$ GB of memory, $100$ GB/s network bandwidth, and a NVIDIA T4 GPU. In practice, we measured 10.8MB/s bandwidth between the two instances. We run all the ResNet architectures and summarize the results in Table~\ref{tab:amazon}.
We repeat each experiment 20 times and report the mean and 95\% confidence intervals.
We observe similar trends to the simulated WAN used in previous experiments. 
Namely, \binotrick~and \honeytrick~perform similarly in runtime for most models. 
The exception is larger models like ResNet50, where the bandwidth limits impact \binotrick~more than \honeytrick.
In all cases, \cryptgpu~is significantly outperformed by both approaches.

%-------------------------------------------------------------------------------
\section{Discussion}\label{sec:discuss}
%-------------------------------------------------------------------------------
Our experimental evaluation in Section~\ref{sec:experiments} showed our algorithms significantly outperform all related work in WAN inference time. While state-of-the-art compared to other polynomial training approaches, \polytrick~still incurs a minor accuracy degradation compared to standard models with ReLUs. We posit a few directions for future work to further close this gap between polynomials and ReLUs.

\paragraph{Quantization.}
Note that aside from our quantization-aware polynomial fitting described in Section~\ref{sec:ml}, we have made no efforts to reduce the effects of quantization.
\coinn~developed training algorithms to help the model be robust to the overflow and quantization present in MPC~\cite{hussainCOINNCryptoML2021}.
An interesting future work would be to combine the \coinn~methods with \polytrick~to see if further accuracy gains are possible.

\paragraph{Precision.}
By using CrypTen as our backend, we were limited to a $64$-bit ring for cryptographic operations. As discussed in Section~\ref{sec.poly_eval}, this precision determines the degree and range of polynomials we can use (due to either the failure probability of truncation or severe truncation of intermediate values). Interesting future work is to increase this precision to enable higher-degree polynomials and study the performance-accuracy trade-off. 
Our initial results on ImageNet show that we can train up to a degree eight polynomial without suffering escaping activations. However, we could not increase the ring size to study the effect of higher degrees on inference time.

\paragraph{MaxPools.}
We recall that a MaxPool layer requires comparisons and, thus, expensive conversions to binary shares (like ReLUs). Therefore, we replaced all MaxPools with AvgPool layers. However, in some architectures, such as VGG-16~\cite{simonyanVeryDeepConvolutional2015}, we found that swapping MaxPool for AvgPool degraded accuracy by up to $6\%$. Finding an efficient MaxPool alternative for architectures like VGG is important for future work. However, since the ResNet models give high accuracy using AvgPool layers we did not pursue this issue further.

%-------------------------------------------------------------------------------
\section {Related Work}\label{sec:related_work}
%-------------------------------------------------------------------------------
This work focuses on achieving state-of-the-art run time and accuracy in two-party secure inference.
We measure this objective by evaluating against the current state-of-the-art as determined by a recent SoK by Ng and Chow~\cite{ngSoKCryptographicNeuralNetwork2023}.
Namely, we compare to COINN~\cite{hussainCOINNCryptoML2021}, GForce~\cite{ngGForceGPUFriendlyOblivious2021} and CrypTen~\cite{knottCrypTenSecureMultiParty2021,tanCryptGPUFastPrivacyPreserving2021} in Section~\ref{sec:experiments} as they represent the Pareto front according to Ng and Chow~\cite{ngSoKCryptographicNeuralNetwork2023}.
Another potential candidate on the Pareto front is Falcon~\cite{liFALCONFourierTransform2020}, with low latency and accuracy~\cite{liFALCONFourierTransform2020}.
We did not evaluate Falcon as the accuracy drop was too significant (over 10\%~\cite{ngSoKCryptographicNeuralNetwork2023}).
Furthermore, \gforce~is shown to outperform Falcon in both latency and accuracy, and we outperform GForce~\cite{ngGForceGPUFriendlyOblivious2021}.
For a complete list of other works not on the Pareto front, we defer to Ng and Chow's work~\cite{ngSoKCryptographicNeuralNetwork2023}.
Notably, many works consider different threat models or use different approaches, such as homomorphic encryption.
We leave extending our polynomial activation functions to these settings for future work.
For the remainder of this section, we discuss works with a similar approach to ours that are not state-of-the-art or not evaluated by Ng and Chow~\cite{ngSoKCryptographicNeuralNetwork2023}.

\paragraph{Replacing or Reducing ReLU's.}
It has been established that the non-linear functions such as ReLU are the bottleneck for secure computation~\cite{garimellaSisyphusCautionaryTale2021,hussainCOINNCryptoML2021,mishraDelphiCryptographicInference2020,CryptoNASProceedings34th}.
Several works initially focused on reducing the number of ReLU activations, optimizing for the best trade-off between accuracy and runtime~\cite{CryptoNASProceedings34th,jhaDeepReDuceReLUReduction2021}.
A faster approach is to replace all ReLU's entirely using polynomial approximations~\cite{garimellaSisyphusCautionaryTale2021}.
In Section~\ref{sec:ml}, we discussed the most recent work in this space, Sisyphus~\cite{garimellaSisyphusCautionaryTale2021}.
While making significant progress toward training models with polynomial activations, Sisyphus could not overcome the escaping activation problem for models with more than 11 layers. 
Before Sisyphus, there were a handful of works on smaller models that typically focus on partial replacement (some ReLU's remained)~\cite{gilad-bachrachCryptoNetsApplyingNeural2016,mishraDelphiCryptographicInference2020,mohasselSecureMLSystemScalable2017}. 
An interesting exception from Lee et al. used degree 29 polynomials in HE but suffered prohibitively high runtimes~\cite{leePreciseApproximationConvolutional2021}.
Our work is the first to make high-accuracy polynomial training feasible (without escaping activations) in deep neural networks.

A notable recent work is PolyKervNets~\cite{aremuPolyKervNetsActivationfreeNeural2023}.
Inspired by the computer vision literature, PolyKervNets remove the activation functions and instead exponentiate the output of each convolutional layer~\cite{aremuPolyKervNetsActivationfreeNeural2023}. The problem with this approach is that, similar to polynomial activation functions, the exponents make the training unstable. Aremu and Nandakumar note that exploding gradients prevent their approach from scaling to ResNet models deeper than ResNet18 (using degree 2 polynomials). Furthermore, PolyKervNets only allow for a single fully connected layer which reduces the accuracy of the models. Conversely, \polytrick~scales to deeper models such as ResNet110 and much higher degrees. Moreover, we achieve significantly better plaintext accuracy on ResNet-18 (93.4 vs 90.1 on CIFAR-10 and 74.9 vs 71.3 on CIFAR-100).

\paragraph{Polynomial Evaluation in MPC.}
Our work focuses on co-designing the activation functions with cryptography by using polynomials. However, the problem of computing polynomials in MPC is of independent interest and has also been studied in the literature. The state-of-the-art in this space is HoneyBadger, as discussed in Section~\ref{sec:crypto}. Other notable works include the initial inspiration for HoneyBadger from Damg\aa{}rd et al.~\cite{damgardUnconditionallySecureConstantRounds2006}.
This early approach conducts exponentiation by blinding and reconstructing the number to be exponentiated so the powers can be computed in plaintext~\cite{damgardUnconditionallySecureConstantRounds2006}. Building off this idea, Polymath constructs a constant round protocol for evaluating polynomials focused on matrices~\cite{luPolymathLowLatencyMPC2022}. However, HoneyBadger outperforms Polymath by reducing both the rounds and the number of reconstructions to one.

%-------------------------------------------------------------------------------
\section{Conclusion}
%-------------------------------------------------------------------------------
In this work, we co-designed the ML and MPC aspects of secure inference to remove the bottleneck of non-linear layers.
\polytrick~maintains a competitive inference accuracy while being significantly faster in wide area networks using novel single round MPC protocols (\binotrick~and \honeytrick).
Our state-of-the-art inference times motivate future work to further improve the ML accuracy of polynomial activations in DNNs.
% \newpage
% %-------------------------------------------------------------------------------
\section*{Acknowledgements}
We gratefully acknowledge the support of the Natural Sciences and Engineering Research Council (NSERC) for grants RGPIN-05849, and IRC-537591, the Royal Bank of Canada, and Amazon Web Services Canada.
%-------------------------------------------------------------------------------

%-------------------------------------------------------------------------------
\section*{Availability}
%-------------------------------------------------------------------------------

We make all source code to reproduce our experiments available here: %\url{https://anonymous.4open.science/r/PILLAR-ESPN-D3E0}.
\url{https://github.com/LucasFenaux/PILLAR-ESPN}.

%-------------------------------------------------------------------------------
\bibliographystyle{plain}
\bibliography{main.bib}

\begin{thebibliography}{10}

\bibitem{code}
\url{https://github.com/LucasFenaux/PILLAR-ESPN}.

\bibitem{abadiDeepLearningDifferential2016}
Martin Abadi, Andy Chu, Ian Goodfellow, H.~Brendan McMahan, Ilya Mironov, Kunal
  Talwar, and Li~Zhang.
\newblock Deep {{Learning}} with {{Differential Privacy}}.
\newblock In {\em Proceedings of the 2016 {{ACM SIGSAC Conference}} on
  {{Computer}} and {{Communications Security}}}, {{CCS}} '16, pages 308--318,
  {New York, NY, USA}, October 2016. {Association for Computing Machinery}.

\bibitem{aremuPolyKervNetsActivationfreeNeural2023}
Toluwani Aremu and Karthik Nandakumar.
\newblock {{PolyKervNets}}: {{Activation-free Neural Networks For Efficient
  Private Inference}}.
\newblock In {\em First {{IEEE Conference}} on {{Secure}} and {{Trustworthy
  Machine Learning}}}, February 2023.

\bibitem{beaver91}
Donald Beaver.
\newblock Foundations of secure interactive computing.
\newblock In {\em Advances in Cryptology \textemdash{} {{CRYPTO}} 1991}, pages
  377--391, 1991.

\bibitem{ben-orCompletenessTheoremsNoncryptographic1988}
Michael {Ben-Or}, Shafi Goldwasser, and Avi Wigderson.
\newblock Completeness theorems for non-cryptographic fault-tolerant
  distributed computation.
\newblock In {\em Proceedings of the Twentieth Annual {{ACM}} Symposium on
  {{Theory}} of Computing}, {{STOC}} '88, pages 1--10, {New York, NY, USA},
  January 1988. {Association for Computing Machinery}.

\bibitem{catrinaRoundEfficientProtocolsSecure2018}
Octavian Catrina.
\newblock Round-{{Efficient Protocols}} for {{Secure Multiparty Fixed-Point
  Arithmetic}}.
\newblock In {\em 2018 {{International Conference}} on {{Communications}}
  ({{COMM}})}, pages 431--436, June 2018.

\bibitem{catrinaImprovedPrimitivesSecure2010}
Octavian Catrina and Sebastiaan {de Hoogh}.
\newblock Improved {{Primitives}} for {{Secure Multiparty Integer
  Computation}}.
\newblock In Juan~A. Garay and Roberto De~Prisco, editors, {\em Security and
  {{Cryptography}} for {{Networks}}}, Lecture {{Notes}} in {{Computer
  Science}}, pages 182--199, {Berlin, Heidelberg}, 2010. {Springer}.

\bibitem{damgardUnconditionallySecureConstantRounds2006}
Ivan Damg{\aa}rd, Matthias Fitzi, Eike Kiltz, Jesper~Buus Nielsen, and Tomas
  Toft.
\newblock Unconditionally {{Secure Constant-Rounds Multi-party Computation}}
  for {{Equality}}, {{Comparison}}, {{Bits}} and {{Exponentiation}}.
\newblock In Shai Halevi and Tal Rabin, editors, {\em Theory of
  {{Cryptography}}}, Lecture {{Notes}} in {{Computer Science}}, pages 285--304,
  {Berlin, Heidelberg}, 2006. {Springer}.

\bibitem{demmlerABYFrameworkEfficient2015}
Daniel Demmler, Thomas Schneider, and Michael Zohner.
\newblock {{ABY}} - {{A Framework}} for {{Efficient Mixed-Protocol Secure
  Two-Party Computation}}.
\newblock In {\em Proceedings 2015 {{Network}} and {{Distributed System
  Security Symposium}}}, {San Diego, CA}, 2015. {Internet Society}.

\bibitem{dengImageNetLargescaleHierarchical2009}
Jia Deng, Wei Dong, Richard Socher, Li-Jia Li, Kai Li, and Li~{Fei-Fei}.
\newblock {{ImageNet}}: {{A}} large-scale hierarchical image database.
\newblock In {\em 2009 {{IEEE Conference}} on {{Computer Vision}} and {{Pattern
  Recognition}}}, pages 248--255, June 2009.

\bibitem{escuderoImprovedPrimitivesMPC2020}
Daniel Escudero, Satrajit Ghosh, Marcel Keller, Rahul Rachuri, and Peter
  Scholl.
\newblock Improved {{Primitives}} for {{MPC}} over {{Mixed Arithmetic-Binary
  Circuits}}.
\newblock In Daniele Micciancio and Thomas Ristenpart, editors, {\em Advances
  in {{Cryptology}} \textendash{} {{CRYPTO}} 2020}, Lecture {{Notes}} in
  {{Computer Science}}, pages 823--852, {Cham}, 2020. {Springer International
  Publishing}.

\bibitem{garimellaSisyphusCautionaryTale2021}
Karthik Garimella, Nandan~Kumar Jha, and Brandon Reagen.
\newblock Sisyphus: {{A Cautionary Tale}} of {{Using Low-Degree Polynomial
  Activations}} in {{Privacy-Preserving Deep Learning}}, November 2021.

\bibitem{CryptoNASProceedings34th}
Zahra Ghodsi, Akshaj~Kumar Veldanda, Brandon Reagen, and Siddharth Garg.
\newblock {{CryptoNAS}} | {{Proceedings}} of the 34th {{International
  Conference}} on {{Neural Information Processing Systems}}.
\newblock {\em Advances in Neural Information Processing Systems},
  33:16961--16971, 2020.

\bibitem{gilad-bachrachCryptoNetsApplyingNeural2016}
Ran {Gilad-Bachrach}, Nathan Dowlin, Kim Laine, Kristin Lauter, Michael
  Naehrig, and John Wernsing.
\newblock {{CryptoNets}}: {{Applying Neural Networks}} to {{Encrypted Data}}
  with {{High Throughput}} and {{Accuracy}}.
\newblock In {\em Proceedings of {{The}} 33rd {{International Conference}} on
  {{Machine Learning}}}, pages 201--210. {PMLR}, June 2016.

\bibitem{goldreich09}
Oded Goldreich.
\newblock {\em Foundations of Cryptography: Basic Applications}, volume~2.
\newblock {Cambridge university press}, 2009.

\bibitem{heDeepResidualLearning2016}
Kaiming He, Xiangyu Zhang, Shaoqing Ren, and Jian Sun.
\newblock Deep {{Residual Learning}} for {{Image Recognition}}.
\newblock In {\em Proceedings of the {{IEEE Conference}} on {{Computer Vision}}
  and {{Pattern Recognition}}}, pages 770--778, 2016.

\bibitem{Haung_Cheetah_22}
Zhicong Huang, Wen jie Lu, Cheng Hong, and Jiansheng Ding.
\newblock Cheetah: Lean and fast secure {Two-Party} deep neural network
  inference.
\newblock In {\em 31st USENIX Security Symposium (USENIX Security 22)}, pages
  809--826, Boston, MA, August 2022. USENIX Association.

\bibitem{hussainCOINNCryptoML2021}
Siam~Umar Hussain, Mojan Javaheripi, Mohammad Samragh, and Farinaz Koushanfar.
\newblock {{COINN}}: {{Crypto}}/{{ML Codesign}} for {{Oblivious Inference}} via
  {{Neural Networks}}.
\newblock In {\em Proceedings of the 2021 {{ACM SIGSAC Conference}} on
  {{Computer}} and {{Communications Security}}}, {{CCS}} '21, pages 3266--3281,
  {New York, NY, USA}, November 2021. {Association for Computing Machinery}.

\bibitem{jagielskiHighAccuracyHigh2020}
Matthew Jagielski, Nicholas Carlini, David Berthelot, Alex Kurakin, and Nicolas
  Papernot.
\newblock High accuracy and high fidelity extraction of neural networks.
\newblock In {\em 29th {{USENIX Security Symposium}} ({{USENIX Security}} 20)},
  {{SEC}}'20, pages 1345--1362, {USA}, August 2020.

\bibitem{jhaDeepReDuceReLUReduction2021}
Nandan~Kumar Jha, Zahra Ghodsi, Siddharth Garg, and Brandon Reagen.
\newblock {{DeepReDuce}}: {{ReLU Reduction}} for {{Fast Private Inference}}.
\newblock In {\em Proceedings of the 38th {{International Conference}} on
  {{Machine Learning}}}, pages 4839--4849. {PMLR}, July 2021.

\bibitem{keller_2016}
Marcel Keller, Emmanuela Orsini, and Peter Scholl.
\newblock Mascot: Faster malicious arithmetic secure computation with oblivious
  transfer.
\newblock In {\em Proceedings of the 2016 ACM SIGSAC Conference on Computer and
  Communications Security}, CCS '16, page 830–842, New York, NY, USA, 2016.
  Association for Computing Machinery.

\bibitem{knottCrypTenSecureMultiParty2021}
Brian Knott, Shobha Venkataraman, Awni Hannun, Shubho Sengupta, Mark Ibrahim,
  and Laurens {van der Maaten}.
\newblock {{CrypTen}}: {{Secure Multi-Party Computation Meets Machine
  Learning}}.
\newblock In {\em Advances in {{Neural Information Processing Systems}}},
  volume~34, pages 4961--4973. {Curran Associates, Inc.}, 2021.

\bibitem{CIFAR}
Alex Krizhevsky et~al.
\newblock Learning multiple layers of features from tiny images.
\newblock 2009.

\bibitem{leePreciseApproximationConvolutional2021}
Junghyun Lee, Eunsang Lee, Joon-Woo Lee, Yongjune Kim, Young-Sik Kim, and
  Jong-Seon No.
\newblock Precise {{Approximation}} of {{Convolutional Neural Networks}} for
  {{Homomorphically Encrypted Data}}, June 2021.

\bibitem{liFALCONFourierTransform2020}
Shaohua Li, Kaiping Xue, Bin Zhu, Chenkai Ding, Xindi Gao, David Wei, and Tao
  Wan.
\newblock {{FALCON}}: {{A Fourier Transform Based Approach}} for {{Fast}} and
  {{Secure Convolutional Neural Network Predictions}}.
\newblock In {\em Proceedings of the {{IEEE}}/{{CVF Conference}} on {{Computer
  Vision}} and {{Pattern Recognition}}}, pages 8705--8714, 2020.

\bibitem{liuObliviousNeuralNetwork2017}
Jian Liu, Mika Juuti, Yao Lu, and N.~Asokan.
\newblock Oblivious {{Neural Network Predictions}} via {{MiniONN
  Transformations}}.
\newblock In {\em Proceedings of the 2017 {{ACM SIGSAC Conference}} on
  {{Computer}} and {{Communications Security}}}, {{CCS}} '17, pages 619--631,
  {New York, NY, USA}, October 2017. {Association for Computing Machinery}.

\bibitem{luPolymathLowLatencyMPC2022}
Donghang Lu, Albert Yu, Aniket Kate, and Hemanta Maji.
\newblock Polymath: {{Low-Latency MPC}} via {{Secure Polynomial Evaluations}}
  and {{Its Applications}}.
\newblock {\em Proceedings on Privacy Enhancing Technologies},
  2022(1):396--416, January 2022.

\bibitem{luHoneyBadgerMPCAsynchroMixPractical2019}
Donghang Lu, Thomas Yurek, Samarth Kulshreshtha, Rahul Govind, Aniket Kate, and
  Andrew Miller.
\newblock {{HoneyBadgerMPC}} and {{AsynchroMix}}: {{Practical Asynchronous
  MPC}} and its {{Application}} to {{Anonymous Communication}}.
\newblock In {\em Proceedings of the 2019 {{ACM SIGSAC Conference}} on
  {{Computer}} and {{Communications Security}}}, {{CCS}} '19, pages 887--903,
  {New York, NY, USA}, November 2019. {Association for Computing Machinery}.

\bibitem{mishraDelphiCryptographicInference2020}
Pratyush Mishra, Ryan Lehmkuhl, Akshayaram Srinivasan, Wenting Zheng, and
  Raluca~Ada Popa.
\newblock Delphi: {{A Cryptographic Inference Service}} for {{Neural
  Networks}}.
\newblock In {\em 29th {{USENIX Security Symposium}} ({{USENIX Security}} 20)},
  pages 2505--2522, 2020.

\bibitem{mohasselSecureMLSystemScalable2017}
Payman Mohassel and Yupeng Zhang.
\newblock {{SecureML}}: {{A System}} for {{Scalable Privacy-Preserving Machine
  Learning}}.
\newblock In {\em 2017 {{IEEE Symposium}} on {{Security}} and {{Privacy}}
  ({{SP}})}, pages 19--38, May 2017.

\bibitem{ngGForceGPUFriendlyOblivious2021}
Lucien K.~L. Ng and Sherman S.~M. Chow.
\newblock \{\vphantom\}{{GForce}}\vphantom\{\}:
  \{\vphantom\}{{GPU-Friendly}}\vphantom\{\} {{Oblivious}} and {{Rapid Neural
  Network Inference}}.
\newblock In {\em 30th {{USENIX Security Symposium}} ({{USENIX Security}} 21)},
  pages 2147--2164, 2021.

\bibitem{ngSoKCryptographicNeuralNetwork2023}
Lucien K.~L. Ng and Sherman S.~M. Chow.
\newblock {{SoK}}: {{Cryptographic Neural-Network Computation}}.
\newblock In {\em 2023 {{IEEE Symposium}} on {{Security}} and {{Privacy}}
  ({{SP}})}, pages 497--514, May 2023.

\bibitem{Rathee_2019}
Deevashwer Rathee, Thomas Schneider, and K.~K. Shukla.
\newblock Improved multiplication triple generation over rings via rlwe-based
  ahe.
\newblock In Yi~Mu, Robert~H. Deng, and Xinyi Huang, editors, {\em Cryptology
  and Network Security}, pages 347--359, Cham, 2019. Springer International
  Publishing.

\bibitem{simonyanVeryDeepConvolutional2015}
Karen Simonyan and Andrew Zisserman.
\newblock Very {{Deep Convolutional Networks}} for {{Large-Scale Image
  Recognition}}, April 2015.

\bibitem{tanCryptGPUFastPrivacyPreserving2021}
Sijun Tan, Brian Knott, Yuan Tian, and David~J. Wu.
\newblock {{CryptGPU}}: {{Fast Privacy-Preserving Machine Learning}} on the
  {{GPU}}.
\newblock In {\em 2021 {{IEEE Symposium}} on {{Security}} and {{Privacy}}
  ({{SP}})}, pages 1021--1038, May 2021.

\bibitem{StealingMachineLearning}
Florian Tram{\`e}r, Fan Zhang, Ari Juels, Michael~K. Reiter, and Thomas
  Ristenpart.
\newblock Stealing machine learning models via prediction apis.
\newblock In {\em 25th USENIX Security Symposium (USENIX Security 16)},
  volume~16, pages 601--618, 2016.

\end{thebibliography}
\appendix
\section{Evaluation of BatchNorm with Polynomials}\label{app:batchnorm}
\begin{table}[H]
\begin{tabular}{lll}
\toprule
            & Without BatchNorm & With BatchNorm \\ \midrule
Normal Relu & 88.77 $\pm$ 0.11           & 90.05 $\pm$ 0.16         \\
PolyRelu    & 82.99 $\pm$ 0.42           & 87.37  $\pm$ 0.13        \\ \bottomrule
\end{tabular}\caption{Comparing the effect of BatchNorm on MiniONN model.}\label{tab:batchnorm}
\end{table}

To study the effect of a batch norm layer on our training process, we train a standard ReLU model and a model with polynomial activations both with and without batch norm layers. 
We use the MiniONN architecture and give the results averaged over three random seeds with $95\%$ confidence intervals in Table~\ref{tab:batchnorm}. 
We find that batch norm layers improve both models. 
However, the improvement due to using batch norm is significantly greater when using polynomial activation functions. 
This is an intuitive result as batch norm helps keep each layer's output bounded and thus reduces the work of our regularization function.

\section{Evaluation of Sigmoid with Polynomials}\label{app:sigmoid}
\begin{table}[H]
    \centering
\begin{tabular}{lll}
\toprule
            & ReLU & Sigmoid \\ \midrule
Standard Activation & 94.7 $\pm$ 0.08           & 90.2 $\pm$ 0.11         \\
Polynomial Approx.   & 93.4 $\pm$ 0.16           & 85.9  $\pm$ 0.11        \\ \bottomrule
\end{tabular}\caption{Comparing the effect of the activation function on a ResNet18 model.}\label{tab:sigmoid}
\end{table}

In this work, we focus on the ReLU activation function, the default in common architectures such as ResNets~\cite{heDeepResidualLearning2016}.
Another reason we focus on ReLU is that it gives better accuracy than alternatives like Sigmoid.
In this section, we highlight this accuracy advantage by the accuracy of a ResNet18 model with different activation functions.
In Table~\ref{tab:sigmoid}, we evaluate both ReLU and Sigmoid with and without using polynomial approximation.
In all cases, we use the ResNet18 architecture with default parameters given in Section~\ref{sec:exp_setup}, we note the polynomial is of degree  $\degree=4$.
The results are averaged over five random seeds and shown with $95\%$ confidence intervals. 
We find that ReLU consistently outperforms Sigmoid with and without using polynomial evaluations.
However, the accuracy of Sigmoid decreases more, relative to ReLU, when using polynomial approximation.
This result further motivates our use of ReLU.
We leave further investigation of Sigmoid and other activations for future work. 

\section{Quantization Aware Polynomial Fitting}\label{app:poly_fit}
\begin{figure}[H]
    \centering
\includegraphics[width=0.75\columnwidth]{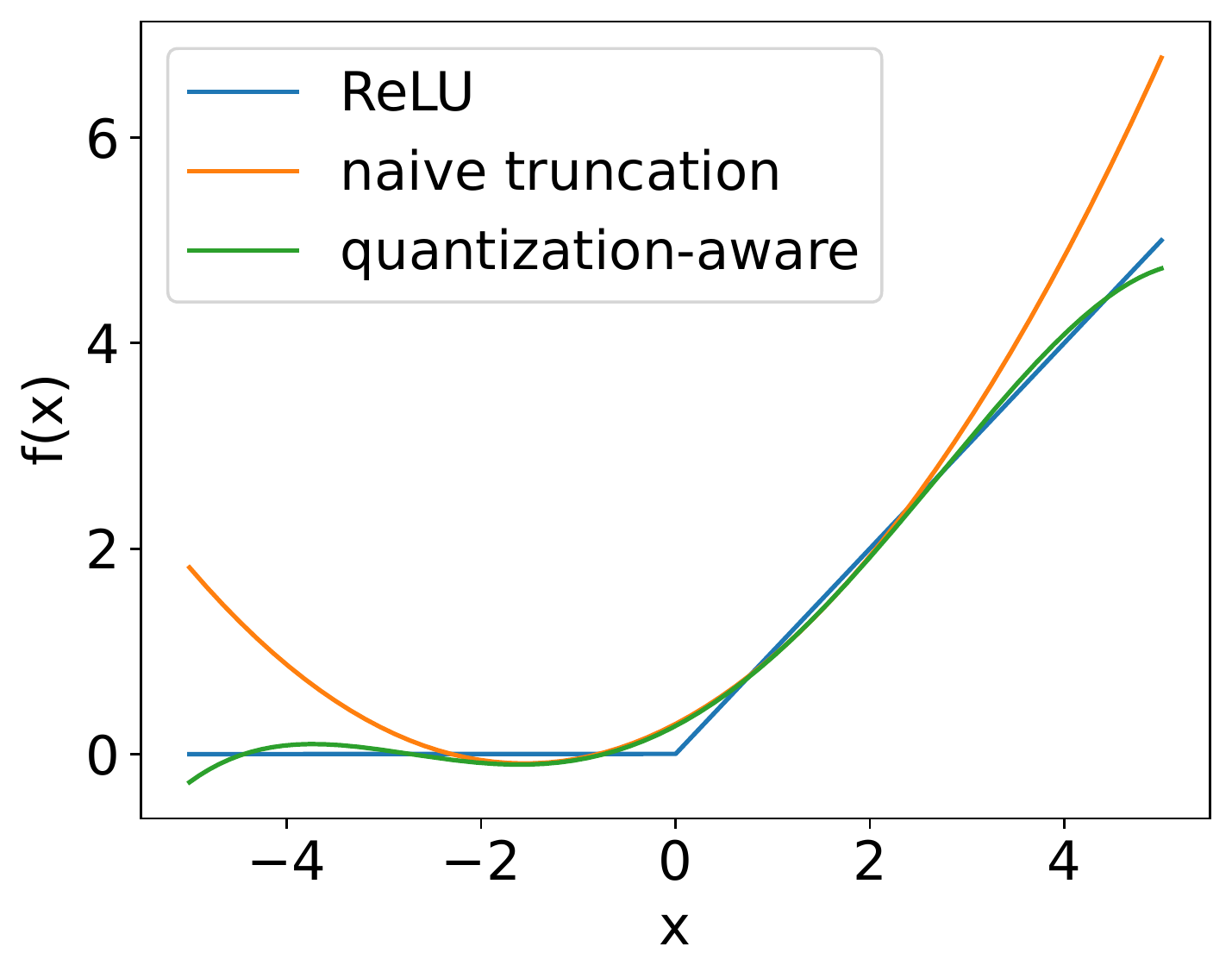}
    \caption{The effect of truncation on a polynomial activation function. }
    \label{fig:quant}
\end{figure}
In Figure~\ref{fig:quant}, we plot the polynomial approximation with and without our quantize-aware fitting approach described in Section~\ref{sec:poly_blow_up}.
First, we plot the polynomial approximation after truncation and see that it diverges from a true ReLU.
We also plot our quantized polynomial fitting and show that it addresses the problems of exploding activations within the range. 

\paragraph{Polynomial Input Distribution After Regularization}
We plot the histogram of the input to all activation functions of a ResNet 18 model on CIFAR-10 in Figure~\ref{fig:laplace}.
\begin{figure}
    \centering
    \includegraphics[width=0.85\columnwidth]{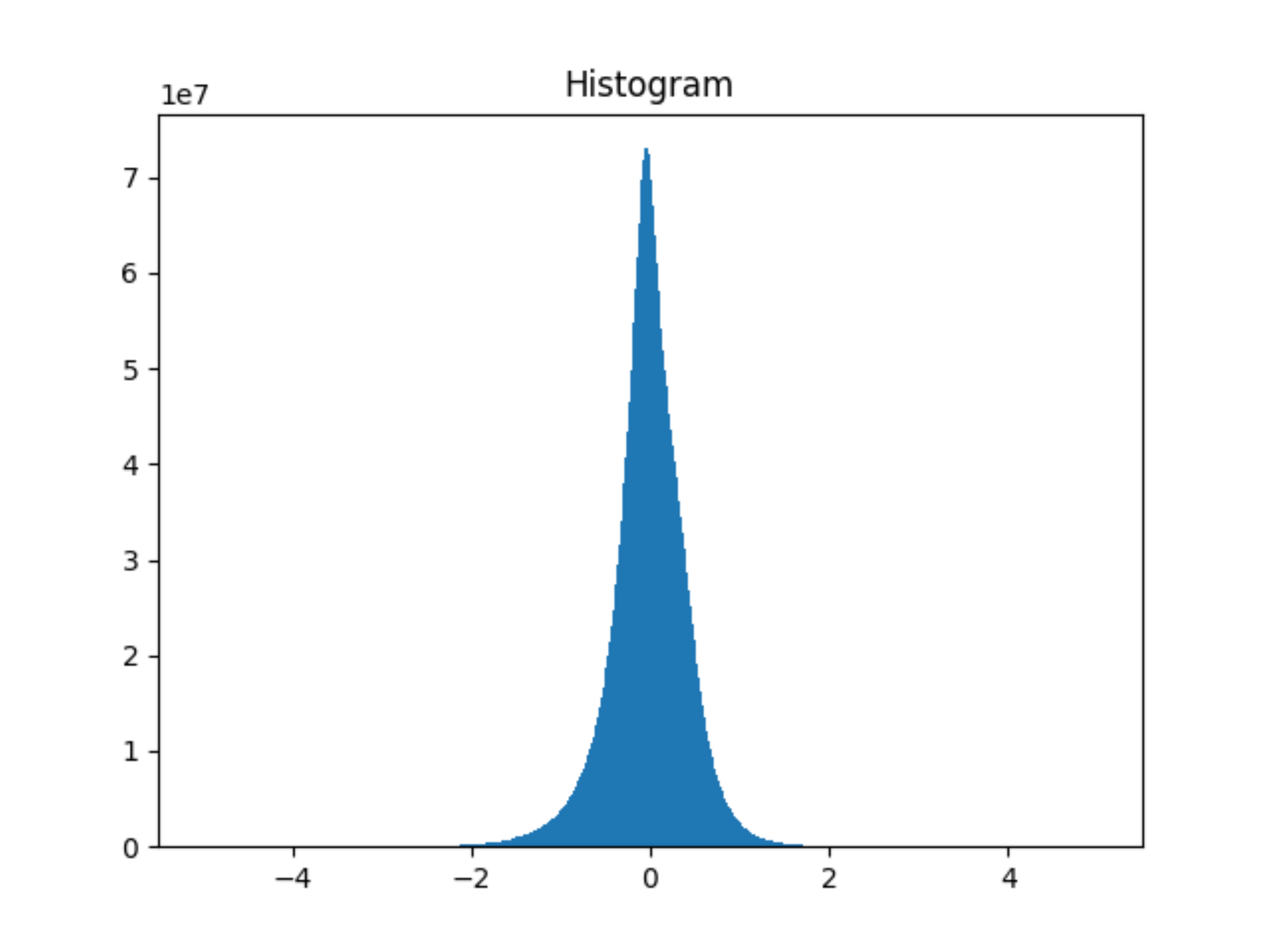}
    \caption{Polynomial Input Distribution of ResNet18 on CIFAR10.}
    \label{fig:laplace}
\end{figure}
\section{Communication and Rounds Benchmark}\label{app:comm_rounds}
\begin{table*}[]
\centering
\begin{tabular}{cccccccc}
\toprule
Dataset                   & Model     & \multicolumn{1}{l}{\binotrick} & \honeytrick & \cryptgpu    & \coinn & \multicolumn{1}{l}{\gforce} & \multicolumn{1}{l}{Cheetah} \\ \midrule
\multirow{3}{*}{CIFAR-10}  & ResNet-18  & 0.46 (38)                & 0.23 (38)   & 0.45 (174)  & /     & /                          & /                           \\
                          & ResNet-110 & 0.55 (221)               & 0.13 (221)  & 0.53 (1093) & 6.8   & /                          & /                           \\
                          & VGG-16     & 0.32 (31)                & 0.26 (31)   & /           & /     & 0.050                      & /                           \\ \cmidrule(lr){1-8}
\multirow{3}{*}{CIFAR-100} & ResNet-18  & 0.46 (38)                & 0.23 (38)   & 0.45 (174)  & /     & /                          & /                           \\
                          & ResNet-32  & 0.16 (65)                & 0.037 (65)  & 0.15 (313)  & 1.9   & /                          & /                           \\
                          & VGG-16     & 0.32 (31)                & 0.26 (31)   & /           & /     & 0.050                      & /                           \\ \cmidrule(lr){1-8}
ImageNet                  & ResNet-50  & 7.85 (160)               & 4.04 (160)  & 7.70 (552)  & 122.0 & /                          & 2.36 (1042)  \\ \bottomrule           
\end{tabular}
\caption{Evaluation of communication in GB and the number of rounds (shown in parenthesis).}\label{tab:comm_rounds}
\end{table*}
We study two additional evaluation metrics of rounds and communication in this section.
Recall that the number of rounds significantly affects the protocol latency over WAN.
The communication affects each round's throughput, depending on the network bandwidth.
In Table~\ref{tab:comm_rounds}, we summarize the communication in GB and the number of rounds (shown in parenthesis) for our work and the related work we compare to in Section~\ref{sec:experiments}. We note that both \coinn~and \gforce~do not log the communication or rounds in their code base. Thus, we report their communication numbers from the corresponding tables in the papers (\coinn~\cite[Table 3]{hussainCOINNCryptoML2021} and \gforce~\cite[Table 7]{ngGForceGPUFriendlyOblivious2021}). Neither work evaluates the number of rounds.

In all cases, \binotrick~and \honeytrick~significantly dominate all evaluated related work in the number of rounds with a $3-5\times$ improvement.
The communication of \binotrick~and \honeytrick~is significantly less than \coinn, approximately the same as \cryptgpu, and more than \gforce~and \cheetah.
Our \honeytrick~solution has approximately $2\times$ the communication of \cheetah, but \cheetah~has $10\times$ the rounds; thus, in practice, \honeytrick~gives much better performance as shown in Section~\ref{sec:experiments}. 
\gforce~has an impressively low communication online (50MB) due to offloading 20GB of communication to an offline phase.
Once again, despite having higher communication, we recall that both \binotrick~and \honeytrick~dominate related work in runtime in the WAN (as shown in Section~\ref{sec:experiments}).

\section{Correctness Proofs}\label{app:correct_proofs}
We begin by proving the correctness of Algorithm~\ref{alg:binomial_exp}.
\begin{thm}\label{thm:alg1}
    Given an input $\shares{\base}$ = $\base_\partyA + \base_\partyB$ and exponent $\exponent$, Algorithm~\ref{alg:binomial_exp}, correctly returns $\shares{\base^\exponent}$.
\end{thm}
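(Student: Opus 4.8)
The plan is to verify that Algorithm~\ref{alg:binomial_exp} faithfully implements the binomial expansion in Equation~\ref{eq:bino_formula}, and that the MPC subroutines it invokes preserve the secret-sharing invariant. First I would trace through the algorithm line by line. After the initialization on line~\ref{line:init}, the vectors $\partyaVec$ and $\partybVec$ hold (trivial) shares of the all-zero vector. The loop then has party $\partyA$ locally set the $i$-th entry of $\partyaVec$ to $\base_\partyA^{\exponent-i}$ and party $\partyB$ locally set the $i$-th entry of $\partybVec$ to $\binom{\exponent}{i}\base_\partyB^i$; since these are purely local assignments into a trivially-shared vector, afterwards $\partyaVec = \shares{(\base_\partyA^{\exponent},\base_\partyA^{\exponent-1},\dots,\base_\partyA^{0})}$ and $\partybVec = \shares{(\binom{\exponent}{0},\binom{\exponent}{1}\base_\partyB,\dots,\binom{\exponent}{\exponent}\base_\partyB^{\exponent})}$ as public-but-held vectors, one per party.

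Next I would invoke the correctness of Beaver multiplication (the single-round protocol recalled in Section~\ref{sec:background}): given shares of two values, $BeaverMultiply$ returns shares of their product, and applied componentwise it returns $\mathbf{p}$ with $\mathbf{p}_i = \shares{\partyaVec_i \cdot \partybVec_i} = \shares{\binom{\exponent}{i}\base_\partyA^{\exponent-i}\base_\partyB^i}$. Then line~\ref{line:get_result} sums these shares locally, which by the linearity property of Equation~\ref{eq:linear_shares} yields $\shares{\sum_{i=0}^{\exponent}\binom{\exponent}{i}\base_\partyA^{\exponent-i}\base_\partyB^i}$. By the binomial theorem (Equation~\ref{eq:bino_formula}) this inner sum equals $(\base_\partyA+\base_\partyB)^{\exponent} = \base^{\exponent}$, so the returned value $s$ is $\shares{\base^{\exponent}}$, as claimed.

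The main obstacle, such as it is, is bookkeeping rather than mathematics: one must be careful about the index convention (party $\partyA$'s entry at position $i$ is $\base_\partyA^{\exponent-i}$, not $\base_\partyA^{i}$) and about the fact that the algorithm writes the result into a vector of trivial shares rather than computing on them directly, so that $BeaverMultiply$ receives genuine two-party shares on both inputs. I would also remark that this theorem concerns only the integer/ring setting; the floating-point rescaling and its (small) failure probability are handled separately in Theorems~\ref{thm:first_trunc} and~\ref{thm:second_trunc} and folded into the analysis of Algorithm~\ref{alg:poly_eval}. With those conventions pinned down, the proof is a direct substitution into the binomial theorem combined with the stated correctness of Beaver multiplication and the linearity of the secret-sharing scheme.
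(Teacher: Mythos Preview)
Your proposal is correct and follows essentially the same line-by-line trace as the paper's proof: identify the local vectors computed by each party, apply Beaver multiplication componentwise, sum using linearity, and conclude via the binomial theorem. If anything, your version is slightly more explicit about invoking the linearity property and the correctness of Beaver multiplication, and your remark separating the integer/ring correctness here from the truncation failure analysis in Theorems~\ref{thm:first_trunc}--\ref{thm:second_trunc} is apt.
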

\begin{proof}
    We begin with party~\partyA. In line~\ref{line:partya}, they compute their share of the vector $\mathbf{\partyaVec}$ where $\partyaVec_i = \base_\partyA^{\exponent-i}$
    (since $\mathbf{\partyaVec}$ was initialized to zero and the for loop iterates over each entry of $\mathbf{\partyaVec}$ exactly once). Similarly, in line~\ref{line:partyb}, party~\partyB~computes their share of $\mathbf{\partybVec}$ where $\partybVec_i = {\exponent \choose i} \base_\partyB^i$.
    We recall that party \partyB's share of $\partyaVec$ is the zero vector, and similarly for party \partyA's share of $\partybVec$
    Then, the vector $\mathbf{p}$ is obtained by multiplying $\partyaVec$ and $\partybVec$ element wise in line~\ref{line:mult_shares}. Thus, $\mathbf{p}_i = \partyaVec_i \cdot \partybVec_i = \base_\partyA^{\exponent-i} {\exponent \choose i} \base_\partyB^i$.
    The final step (line~\ref{line:get_result}), simply sums $\mathbf{p}$.
    Therefore, 
    \begin{equation}
        s = \sum_{i=1}^\exponent \mathbf{p_i} = \sum_{i=1}^\exponent {\exponent \choose i} \base_\partyA^{\exponent-i}\base_\partyB^i
    \end{equation}
    which applying the binomial theorem (\ref{eq:bino_formula}) gives $(\base_\partyA + \base_\partyB)^\exponent = \shares{\base^\exponent}$.
\end{proof}

Given the correctness of Algorithm~\ref{alg:binomial_exp}, we now prove the correctness of Algorithm~\ref{alg:poly_eval}. To begin, we bound the failure probability of each truncation step in Algorithm~\ref{alg:poly_eval}.

\begin{thm}\label{thm:first_trunc}
    Consider computing a degree $\degree$ polynomial fitted to the range $[-\range, \range)$. 
    Let the global precision (size of the ring) be $\totalprecision$-bit and the working precision of each value be $\precision$-bit.
    Then, the truncation in line~\ref{line:pre_scale_down} of Algorithm~\ref{alg:poly_eval}, fails (The local division of $[[\base]_\partyA, [\base]_\partyB]* 2^{-\scaleDown}\neq \shares{\base * 2^{-\scaleDown}}$) with probability at most
\begin{equation}\label{eq:precision}
    Pr[\text{Line~\ref{line:pre_scale_down} Failure}] \leq \frac{2^{\lceil \log_2{\range}\rceil+1 + \precision}}{2^{\totalprecision}}.
\end{equation}
\end{thm}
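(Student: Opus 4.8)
The plan is to reduce the claim to CrypTen's local-truncation failure bound recalled in the ``Floating Point Considerations'' paragraph: dividing a secret-shared ring element locally produces an incorrect sharing with probability at most $|z|/2^{\totalprecision}$, where $z$ is the ring element being truncated (viewed as a signed integer) and $2^{\totalprecision}$ is the ring modulus. Line~\ref{line:pre_scale_down} of Algorithm~\ref{alg:poly_eval} performs exactly one such local truncation --- mapping $\shares{\base}$ to $\shares{\base\cdot 2^{-\scaleDown}}$ --- so the whole argument comes down to upper bounding the magnitude of the value being truncated. Note in particular that the divisor $2^{\scaleDown}$, and hence the ceiling in line~\ref{line:scale_down_factor}, does \emph{not} enter CrypTen's bound: only the numerator magnitude and the ring size do, so $\scaleDown$ can be ignored for this step.

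First I would make the fixed-point encoding explicit: at working precision $\precision$ the real value $\base$ is carried in the ring as the integer $\lfloor \base\cdot 2^{\precision}\rceil$, of magnitude at most $|\base|\cdot 2^{\precision} + \tfrac12$. Next, since the polynomial is fitted to $[-\range,\range)$ and $\base$ is precisely the input fed into that activation --- the operating regime PILLAR's clipping and activation regularization are designed to enforce, with worst case $|\base| = \range$ matching the $\pm\range$ bound used in the hyperparameter discussion --- we have $|\base| \le \range$. Therefore the truncated value has magnitude at most $\range\cdot 2^{\precision} + \tfrac12 \le 2^{\lceil \log_2 \range\rceil}\cdot 2^{\precision} + \tfrac12 \le 2^{\lceil \log_2 \range\rceil + 1 + \precision}$, where the final step rounds up by one bit to absorb both the half-integer from fixed-point rounding and the boundary point $\base = -\range$ when $\range$ is itself a power of two. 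Substituting this magnitude into CrypTen's bound yields $Pr[\text{Line~\ref{line:pre_scale_down} Failure}] \le 2^{\lceil \log_2 \range\rceil + 1 + \precision}/2^{\totalprecision}$, which is the claim.

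I do not expect a genuine obstacle here --- once the reduction is in place it is bookkeeping on the fixed-point scale and on the in-range assumption for $\base$. The two points that need care are keeping the scale factor $2^{\precision}$ straight (it is the source of the $+\precision$ in the exponent) and remembering that the truncation amount $\scaleDown$ plays \emph{no} role in this particular bound. The companion statement for line~\ref{line:post_scale_down} (Theorem~\ref{thm:second_trunc}) will follow the identical template, but applied to a post-exponentiation value of magnitude up to $\range^{i}$ carried at scale $\approx 2^{2\precision}$ over exponents $i \le \degree$; this gives the larger bound $2^{\degree(\lceil \log_2 \range\rceil + 1) + 2\precision}/2^{\totalprecision}$, and the maximum of the two bounds over all exponents is exactly what feeds the per-truncation failure probability used to select the hyperparameters.
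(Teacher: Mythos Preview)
Your proposal is correct and follows essentially the same approach as the paper's proof: both reduce to CrypTen's $|x|/2^{\totalprecision}$ truncation-failure bound and then upper bound the encoded magnitude of the in-range input $\base$ by $2^{\lceil \log_2 \range\rceil + 1 + \precision}$. Your version is in fact slightly more careful (making the fixed-point rounding and the role of $\scaleDown$ explicit), but the structure and key step are identical.
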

\begin{proof}
    Consider the first truncation of Algorithm~\ref{alg:poly_eval} in line~\ref{line:pre_scale_down}.
    The input to this truncation is $\base$ which we assume is contained in the range $[-\range, \range)$ with a working precision of $\precision$-bits.
    Therefore, the size of $\base$ is $2^{\lceil \log_2{\range}\rceil+1}$ in the integer part and $2^\precision$ in the decimal part.
    Which gives, $|x| \leq 2^{\lceil \log_2{\range}\rceil+1 + \precision}$.
    Given that we are working in a $\totalprecision$-bit ring and the probability of failure of the truncation protocol is bounded by $|x|/Q$ where $Q$ is the ring size~\cite{knottCrypTenSecureMultiParty2021}, the result follows.
\end{proof}
\begin{thm}\label{thm:second_trunc}
    Consider computing a degree $\degree$ polynomial fitted to the range $[-\range, \range)$. 
    Let the global precision (size of the ring) be $\totalprecision$-bit and the working precision of each value be $\precision$-bit.
    Then, the truncation in line~\ref{line:post_scale_down} of Algorithm~\ref{alg:poly_eval} fails (The local division of $[[\base]_\partyA, [\base]_\partyB]* 2^{-\precision*(i-1)+ \scaleDown\cdot i}\neq \shares{\base * 2^{-\precision*(i-1)+ \scaleDown\cdot i}}$) with probability at most
\begin{equation}\label{eq:precision}
    Pr[\text{Line~\ref{line:post_scale_down} Failure}] \leq \frac{2^{i(\lceil \log_2{\range}\rceil+1) + 2\precision}}{2^{\totalprecision}}.
\end{equation}
where $i$ is the power of $\base$ being truncated ($i\leq \degree$).
\end{thm}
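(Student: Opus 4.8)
The plan is to mirror the proof of Theorem~\ref{thm:first_trunc}: bound the magnitude of the integer-encoded value that enters the truncation on line~\ref{line:post_scale_down}, and then invoke CrypTen's local two-party truncation failure bound of $|x|/Q$ with $Q = 2^{\totalprecision}$. The only real work is tracking how both the real magnitude and the working precision (number of fractional bits) evolve through lines~\ref{line:pre_scale_down}--\ref{line:exp}.

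First I would pin down $\base'_i$ after line~\ref{line:pre_scale_down}. Since $\base \in [-\range,\range)$ with working precision $\precision$, its integer encoding has magnitude at most $2^{\lceil\log_2\range\rceil+1+\precision}$, exactly as argued in Theorem~\ref{thm:first_trunc}. Dividing by $2^{\scaleDown}$ with $\scaleDown = \lceil(i-2)\precision/i\rceil$ keeps the real value in (a loose enlargement of) $[-\range,\range)$ while lowering the working precision to $\precision-\scaleDown$; concretely, $|\base'_i| \le 2^{\lceil\log_2\range\rceil+1}$ as a real number carrying $\precision-\scaleDown$ fractional bits (the truncation performed by the scaling only ever decreases magnitude, so the loose bound absorbs any rounding). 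Next, by Theorem~\ref{thm:alg1}, line~\ref{line:exp} returns secret shares of $(\base'_i)^i$, whose real magnitude is therefore at most $2^{i(\lceil\log_2\range\rceil+1)}$ and whose working precision is $i(\precision-\scaleDown)$. Hence the value $\mathbf{p}_i$ handed to the truncation on line~\ref{line:post_scale_down} has integer encoding bounded by $2^{i(\lceil\log_2\range\rceil+1)+i(\precision-\scaleDown)}$.

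The crucial algebraic step is the choice of $\scaleDown$ on line~\ref{line:scale_down_factor}: from $\scaleDown = \lceil(i-2)\precision/i\rceil \ge (i-2)\precision/i$ we get $i(\precision-\scaleDown) \le i\precision - (i-2)\precision = 2\precision$, so the ceiling only helps the bound. Substituting gives that the encoded magnitude of $\mathbf{p}_i$ is at most $2^{i(\lceil\log_2\range\rceil+1)+2\precision}$, and then applying the $|x|/Q$ truncation failure bound with $Q = 2^{\totalprecision}$ yields the claimed inequality.

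The main obstacle is bookkeeping rather than depth: one must be careful that "working precision" composes multiplicatively under the exponentiation of Algorithm~\ref{alg:binomial_exp}, and — more subtly — that the local-truncation failure probability is governed by the magnitude of the \emph{reconstructed} value $(\base'_i)^i$, not by the individual shares $\partyaVec_i,\partybVec_i$ (or their product) computed inside Algorithm~\ref{alg:binomial_exp}, which may individually be close to the ring size. Once that is noted, the CrypTen bound applies to $\mathbf{p}_i$ directly and the rest is the arithmetic above.
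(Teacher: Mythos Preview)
Your proposal is correct and follows essentially the same approach as the paper: bound the encoded magnitude of $\mathbf{p}_i$ by tracking $\base'_i$ through the scaling and exponentiation, use $\scaleDown \ge (i-2)\precision/i$ to collapse the precision term to $2\precision$, and then apply the $|x|/2^{\totalprecision}$ truncation bound. The only cosmetic difference is that you separate ``real magnitude'' from ``fractional bits'' whereas the paper carries the single encoded quantity $|\base'_i| \le 2^{\lceil\log_2\range\rceil+1+\precision}/2^{\scaleDown}$ throughout; the arithmetic is identical.
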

\begin{proof}
    Consider the truncation in line~\ref{line:post_scale_down} of Algorithm~\ref{alg:poly_eval}. 
    The input to this truncation is the output of the previous truncation in line~\ref{line:pre_scale_down}, raised to the power $i$.
    We assume the previous truncation was correct.
    Then after the truncation, 
    \begin{equation}
        |x'_i| = \frac{2^{\lceil \log_2{\range}\rceil+1 + \precision}}{2^{\lceil (i-2) \precision / i \rceil}} \leq \frac{2^{\lceil \log_2{\range}\rceil+1 + \precision}}{ 2^{(i-2) \precision / i} }
    \end{equation}
    for $i\in\{2,\dots, \degree\}$, where the inequality holds because $(i-2) \precision / i$ is positive.
    After applying the exponentiation by $i$ we get $|\mathbf{p}_i| \leq 2^{i(\lceil \log_2{\range}\rceil+1) + 2\precision}$.
    Given that we are working in a $\totalprecision$-bit ring and the probability of failure of the truncation protocol is bounded by $|x|/Q$ where $Q$ is the ring size~\cite{knottCrypTenSecureMultiParty2021}, the result follows.
\end{proof}

\begin{thm}\label{thm:alg2}
Given an input $\shares{\base}$ = $\base_\partyA + \base_\partyB$ and polynomial coefficients $\coefficients$, Algorithm~\ref{alg:poly_eval}, correctly returns the polynomial evaluation $\shares{\sum_{i=0}^\degree \coefficients_i \cdot \base^i}$ except with probability.
\begin{equation*}
    Pr[\text{Alogrithm~\ref{alg:poly_eval} Fails}] \leq 
     \frac{\sum\limits_{i=2}^\degree\left( 2^{i(\lceil \log_2{\range}\rceil+1) + 2\precision} + 2^{\lceil \log_2{\range}\rceil+1 + \precision}\right)}{2^{\totalprecision}}.
\end{equation*}
\end{thm}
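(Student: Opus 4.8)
The plan is to combine the correctness of Algorithm~\ref{alg:binomial_exp} (Theorem~\ref{thm:alg1}) with the two truncation failure bounds (Theorems~\ref{thm:first_trunc} and~\ref{thm:second_trunc}) via a union bound. First I would establish \emph{conditional correctness}: assuming every truncation in the execution succeeds, the algorithm returns the exact polynomial evaluation. This amounts to tracking the scaling factors through lines~\ref{line:pre_scale_down}--\ref{line:post_scale_down}. The input $\base$ enters with scale $2^\precision$; after scaling by $2^{-\scaleDown}$ in line~\ref{line:pre_scale_down}, the value $x'_i$ carries scale $2^{\precision - \scaleDown}$. Applying $Exp(\cdot, i)$ (correct by Theorem~\ref{thm:alg1}) produces $\mathbf{p}_i$ with scale $2^{i(\precision-\scaleDown)}$. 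The post-exponentiation scaling in line~\ref{line:post_scale_down} multiplies by $2^{-\precision(i-1) + \scaleDown\cdot i}$, yielding total scale $2^{i(\precision - \scaleDown) - \precision(i-1) + \scaleDown\cdot i} = 2^{\precision}$, so $\mathbf{p}'_i$ is exactly $\shares{\base^i}$ at working precision $\precision$. The final local dot product in line~\ref{line:get_poly} then gives $\shares{\coefficients_0 + \sum_{i=1}^\degree \coefficients_i \base^i}$, which is the claimed output. The key arithmetic check here is simply that the exponents of $2$ cancel to leave $\precision$ — a routine but load-bearing computation.

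Next I would bound the failure probability. The only sources of error are the local truncation operations: there are two per power $i\in\{2,\dots,\degree\}$, namely line~\ref{line:pre_scale_down} and line~\ref{line:post_scale_down}. (The power $i=1$ needs no exponentiation or scaling, so $\mathbf{p}_1 = \base$ contributes nothing.) Theorem~\ref{thm:first_trunc} bounds the line~\ref{line:pre_scale_down} failure for each $i$ by $2^{\lceil \log_2{\range}\rceil + 1 + \precision}/2^{\totalprecision}$, and Theorem~\ref{thm:second_trunc} bounds the line~\ref{line:post_scale_down} failure — conditioned on the preceding truncation succeeding, which is exactly the hypothesis of that theorem — by $2^{i(\lceil \log_2{\range}\rceil + 1) + 2\precision}/2^{\totalprecision}$. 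A union bound over all $2(\degree-1)$ truncation events gives
\begin{equation*}
Pr[\text{Algorithm~\ref{alg:poly_eval} fails}] \leq \sum_{i=2}^{\degree} \frac{2^{i(\lceil \log_2{\range}\rceil + 1) + 2\precision} + 2^{\lceil \log_2{\range}\rceil + 1 + \precision}}{2^{\totalprecision}},
\end{equation*}
which is the stated bound. One subtlety to handle carefully: the union bound must be applied to the \emph{unconditioned} event that some truncation fails, so I would phrase it as $Pr[\text{some failure}] \le \sum_j Pr[\text{truncation } j \text{ fails} \mid \text{truncations } 1,\dots,j-1 \text{ succeed}]$, since this is exactly the form in which Theorem~\ref{thm:second_trunc}'s bound is valid; standard argument shows this dominates the true failure probability.

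The main obstacle is not any single step but making the conditioning rigorous: Theorem~\ref{thm:second_trunc} only bounds the line~\ref{line:post_scale_down} failure \emph{given} that line~\ref{line:pre_scale_down} succeeded, so one cannot naively add unconditional probabilities. The clean fix is the chained union bound above, noting that $Pr[A \cup B] \le Pr[A] + Pr[B \mid A^c]$ when we only have a bound on $Pr[B\mid A^c]$; iterating this over all truncation steps (ordered so each conditions on its predecessors within the same power $i$, and treating distinct $i$'s as independent parallel computations) yields the sum without over- or under-counting. A secondary minor point is confirming that when line~\ref{line:pre_scale_down} fails we make no claim about downstream correctness — which is fine, since we have already included that failure event in the union. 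Everything else is bookkeeping on powers of two.
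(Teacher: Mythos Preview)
Your proposal is correct and follows essentially the same approach as the paper: establish conditional correctness by tracking the scaling factors through lines~\ref{line:pre_scale_down}--\ref{line:post_scale_down} to verify the exponents of $2$ cancel to $\precision$, then union-bound over the $2(\degree-1)$ truncation events using Theorems~\ref{thm:first_trunc} and~\ref{thm:second_trunc}. Your treatment of the conditioning (the chained inequality $Pr[A\cup B]\le Pr[A]+Pr[B\mid A^c]$) is in fact more careful than the paper's, which simply asserts that applying the two bounds for each $i$ yields the total.
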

\begin{proof}
    First, prove correctness assuming the truncation operators are correct (then we will account for the probability of failure).
    We begin by proving that $\mathbf{p'_i}=x^i$ for $i \in \{1,\dots, \degree\}$ after line~\ref{line:post_scale_down}.
    We note that $\base$ is actually $\base * 2^\precision$ due to the fixed point encoding.
    $\mathbf{p_1}$ follows trivially.
    For $i \in \{2,\dots, \degree\}$, we work backwards, from line~\ref{line:post_scale_down} to line~\ref{line:scale_down_factor} expanding the definition of $\mathbf{p'_i}$
    \begin{eqnarray}
        \mathbf{p'_i} &=& \mathbf{p}_i * 2^{-\precision*(i-1)+ \scaleDown\cdot i} \\
        &=& (x'_i)^i * 2^{-\precision*(i-1)+ \scaleDown\cdot i} \\
        &=& (x \cdot 2^\precision \cdot 2^{-\scaleDown})^i * 2^{-\precision*(i-1)+ \scaleDown\cdot i}\label{eq:substitution}\\
        % &=& x^i \cdot 2^{\precision i} \cdot  2^{-\scaleDown\cdot i} \cdot 2^{-\precision*(i-1)+ \scaleDown\cdot i}\\
        &=& x^i \cdot 2^\precision
    \end{eqnarray}
    where the first lines, up to (\ref{eq:substitution}), come from the substitution of lines~\ref{line:post_scale_down} through~\ref{line:pre_scale_down}, respectively. The remaining step follows from basic algebra.

    Finally, we bound the failure probability of the algorithm. The truncation steps in line~\ref{line:pre_scale_down} and line~\ref{line:post_scale_down} each introduce a possibility of wrap around error. Each truncation is executed $\degree-1$ times. Thus, applying the bounds derived in Theorem~\ref{thm:first_trunc} and Theorem~\ref{thm:second_trunc} for each $i$ in the for loop, the total failure probability follows.
\end{proof}

%%%%%%%%%%%%%%%%%%%%%%%%%%%%%%%%%%%%%%%%%%%%%%%%%%%%%%%%%%%%%%%%%%%%%%%%%%%%%%%%
\end{document}